\documentclass[11pt]{article}

\usepackage[a4paper,margin=1in]{geometry}
\usepackage{amsmath,amssymb,amsthm}
\usepackage{mathtools}
\usepackage[colorlinks=true,linkcolor=blue,citecolor=blue,urlcolor=blue]{hyperref}
\usepackage{enumitem}
\usepackage{fancyhdr}
\usepackage{xcolor}

\newtheorem{theorem}{Theorem}
\newtheorem{corollary}{Corollary}
\newtheorem{proposition}{Proposition}
\newtheorem{definition}{Definition}
\newtheorem*{remarknum}{Remark}
\newtheorem*{conventionnum}{Convention}

\newcommand{\N}{\mathbb{N}}
\newcommand{\emp}{\varnothing}
\newcommand{\femp}{f_{\emp}}
\newcommand{\eemp}{e_{\emp}}
\newcommand{\Phis}{\Phi^{*}}
\newcommand{\Preserve}{\mathrm{PRESERVE}_{\Phi,P}}
\newcommand{\Fin}{\mathrm{Fin}}
\newcommand{\Pfin}{P_{\mathrm{fin}}}
\newcommand{\modelsfin}{\vDash_{\mathrm{fin}}}
\newcommand{\PSPACE}{\textsf{PSPACE}}
\newcommand{\NP}{\textsf{NP}}
\newcommand{\PClass}{\textsf{P}}
\newcommand{\LFP}{\textsf{LFP}}
\newcommand{\PFP}{\textsf{PFP}}
\newcommand{\FO}{\textsf{FO}}
\newcommand{\SO}{\textsf{SO}}
\newcommand{\coRE}{\textsf{co-RE}}
\newcommand{\RE}{\textsf{RE}}
\newcommand{\HP}{\mathrm{HP}}

\title{\textbf{The Unverifiability of Artificial General Intelligence (AGI) Alignment, Static and Dynamic:\\
From Trakhtenbrot's Wall to the Safety--Generality Tension}}
\author{Jose Pascual Gumbau Mezquita$^{1}$\\
\small $^{1}$University Jaume I de Castell\'o, Spain\\
\small \texttt{gumbau@uji.es}}
\date{}

\begin{document}

\maketitle

\begin{abstract}
\noindent
We establish the mathematical limits of AGI safety in both of its natural forms: verifying a fixed system, and verifying that a certified safety property persists once the system is allowed to self-modify. In the static case, we prove that no algorithm can certify the safe behaviour of a highly expressive AGI infallibly, completely, and tractably, whether evaluated over unbounded input domains --- blocked by Rice's Theorem and G\"odel's Incompleteness --- or over the class of all finite hardware configurations --- blocked by Trakhtenbrot's Theorem, which stratifies further into a $\PSPACE$-hardness tractability barrier and a $\coRE$-completeness decidability barrier. This yields a Soundness--Completeness--Tractability Trilemma as a structural, not statistical, necessity. In the dynamic case, we formalise self-modification as a computable transition operator on program indices and prove that no algorithm can determine, from a system's current certified safety, whether a safety property will survive its next self-modification step --- a result we show reduces directly to Rice's Theorem applied one level up, to the property of preserving safety under the transition operator rather than to safety itself, so that the static and dynamic barriers are two faces of one obstruction rather than independent results. This forces an exclusive dichotomy: persistent algorithmic safety certification is attainable only for systems that have stopped evolving semantically --- that is, only for systems that are no longer general but narrow. Nor can the obstruction be delegated away by having one AGI supervise another: any supervisor adequate to audit a general AGI is itself a general AGI, so the supervisory regress reproduces the same undecidability at every level and never terminates. Three compounding practical risks sharpen this impossibility --- finite test coverage, bounded deliberation time, and restricted observational vocabulary, the last a formal vindication of what we call the Paradox of Weakness --- and we show they are instances of one phenomenon: every bounded verification scheme that does not arbitrarily reject correct evidence (a property we call faithfulness) admits an evolution trace it certifies at every stage while the property is, in fact, persistently violated; non-faithful schemes fail even more directly. Taken together, these results show that verifiably safe AGI --- genuinely general and persistently certifiable as safe by algorithmic means --- is not an engineering target deferred by current limitations, but a structural tension between two properties that cannot be jointly maximised --- an Expressivity Invariant governed by the same computational laws as the Halting Problem and Rice's Theorem.

\medskip
\noindent\textbf{Keywords:} Artificial General Intelligence (AGI), AI Alignment, Self-Modifying Systems, Computability Theory, Descriptive Complexity, Recursion Theorem, Trakhtenbrot's Theorem, G\"odel's Incompleteness, Rice's Theorem, Verifier Regress, Formal Verification.
\end{abstract}

\noindent\footnotesize{$^{*}$The structuring, formal-language editing, and orthotypographic review of this document were conducted with the assistance of artificial intelligence models (Gemini, Google, 2026; Claude, Anthropic, 2026).}\\
\noindent\footnotesize{$^{\dagger}$This preprint is archived at Zenodo: \url{https://doi.org/10.5281/zenodo.20764007} (concept DOI; always resolves to the latest version).}
\normalsize

\section{Introduction}

The alignment of Artificial General Intelligence (AGI) poses the challenge of ensuring that an autonomous system, possessing unbounded learning and recursive capabilities, operates exclusively within human-specified safety constraints. Two questions sit beneath this challenge, and this article answers both within a single formal framework. The static question is whether an absolute verifier can be constructed at all: given a fixed AGI $M_x$ and a fixed safety property $P$, can an algorithm decide, infallibly, completely, and tractably, whether $M_x$ satisfies $P$? The dynamic question is subtler, and more consequential for real deployed systems: given that a safety property has been verified to hold now, can an algorithm decide, before the fact, whether it will continue to hold once the system has modified itself? Recursive self-improvement --- a system that inspects, modifies, and redeploys its own source code --- is a central feature ascribed to AGI, and a primary source of both its promise and the concern it generates; the dynamic question asks whether a verified guarantee survives exactly this process.

\medskip
\noindent\textbf{Thesis Statement.} We argue that the impossibility established in this article is not an engineering hurdle awaiting better hardware, faster algorithms, or larger test suites, but a single structural obstruction that reappears, unchanged, every time one tries to escape it. Verifying a fixed AGI once (Part~I) is blocked by Rice's Theorem, G\"odel's Incompleteness, and Trakhtenbrot's Theorem; letting the system evolve to sidestep the fixed case (Part~II) reproduces the obstruction one level up, at the property of preserving safety under self-modification; delegating verification to a more capable supervisor, and that one to a higher supervisor still (Part~II), reproduces it again at every level of the resulting tower; and weakening exact certification to a practical, bounded scheme (Part~III) reproduces it once more, as an evasion trace no faithful scheme can rule out. The unifying principle behind all four is one and the same: the expressivity a system needs in order to be a genuine AGI --- Turing-completeness and open-ended self-modification --- is exactly the expressivity that places it, and any adequate verifier of it, beyond algorithmic certification. Safety certifiability and genuine generality therefore cannot be jointly maximised, by the same computational laws that govern the Halting Problem and Rice's Theorem.
\medskip

\noindent\textbf{Unifying Principle: The Expressivity Invariant.} We give this single obstruction a name, since it is the organising principle of the entire article. Call it the \emph{Expressivity Invariant}: any system expressive enough to be a genuine AGI --- Turing-complete and capable of open-ended self-modification --- is thereby expressive enough that neither its safety, nor the safety of any adequate verifier of it, can be algorithmically certified. It is an \emph{invariant} in the exact sense that every transformation introduced to remove it preserves it instead: evolving the system (Part~II), delegating to a supervisor (Part~II), and weakening exact certification to a bounded practical scheme (Part~III) each relocate the obstruction to a new level rather than eliminating it. Two corollaries of the invariant recur throughout and are named where they arise: the \emph{Safety--Generality Dichotomy} (Corollary~4), that only a system which has ceased to be general can be certified safe to persist; and \emph{Faithful Evasion} (Theorem~9), that even an honest bounded verifier admits a trace it forever certifies while the property is, in fact, forever violated.
\medskip

Part I establishes that absolute static verification is not an engineering optimisation problem but a structural impossibility, closing the two computational vectors available to a verifier: infinite means (open domains), blocked by Rice's Theorem and G\"odel's Incompleteness, and finite means (hardware grounding), blocked by Trakhtenbrot's Theorem. A Finite Structural Unverifiability theorem further shows that the finite case stratifies into a tractability barrier and a decidability barrier, forcing a Soundness--Completeness--Tractability Trilemma as a structural, not statistical, necessity.

Part II turns to the dynamic question. We formalise self-modification as a computable transition operator $\Phi$ acting on G\"odel indices, and prove that no algorithm can determine, from a system's current certified safety, whether a safety property will persist across its next self-modification step. The question sounds independent of the static one --- it concerns the dynamics of verification across a sequence of systems rather than a single system --- but it is not: we show it reduces to Rice's Theorem applied one level up, to the property of preserving $P$ under $\Phi$ rather than to $P$ itself, making the static and dynamic barriers two faces of the same obstruction rather than two separate results. This yields a precise dichotomy: persistent algorithmic certification is available exactly when a system has stopped evolving semantically, i.e.\ exactly when it ceases to be general in the sense that distinguishes AGI from narrow AI, giving formal content to an observation that runs informally through Part I --- that an AGI mirrors Peano Arithmetic, rather than a decidable structure, precisely because recursive self-improvement is self-reference in motion. We then show that the obvious escape --- delegating verification to a second AGI, supervised in turn by a third, and so on --- does not terminate: each supervisor capable of auditing a general AGI is itself a general AGI, re-inheriting the very undecidability it was introduced to discharge, so no finite tower of supervisors ever yields an unconditional certificate. In short: only a Narrow AI --- a system that has already stopped evolving semantically, in the precise sense of Section~2 and Definition~4 --- can have its safety certified to persist automatically. This is not because a genuinely general AGI cannot be verified at all: Corollary~2 permits exactly that, for one fixed structure at one instant. It is because no algorithm can extend that instant's certificate to whatever the system becomes next, and only a system that has ceased to become anything new is exempt from that limit. Part III then asks how far practical alternatives --- statistical testing, bounded-time inspection, restricted monitoring vocabularies --- can escape this dynamic impossibility, and shows: not far, and not by accident. We unify three compounding risks into a single notion of a bounded verification scheme, and prove that every such scheme that is faithful (does not reject correct evidence) admits an evolution trace it certifies at every stage while the property is, in fact, persistently violated.

Part IV maps these combined bounds onto practical AI engineering, showing that containment strategies --- Shielding, Proof-Carrying Code, Bounded Horizon Planning, AI Boxing --- are mandatory structural sacrifices rather than temporary patches, each trading away exactly one of: the generality of the transition operator, bounded verification latency, or the exactness of the safety guarantee. Each Part thus closes off one escape route from the static impossibility, and each closure is the same obstruction recurring rather than a new one --- a point we flag explicitly where it arises (Remarks in Sections~8, 10, and~11). Section~2 fixes the formal framework used throughout; Sections~3--8 (Part I) prove the static results; Sections~9--11 (Part II) prove the dynamic result, its dichotomy, and the non-termination of the supervisory regress; Sections~12--15 (Part III) analyse the three compounding risks and the universal evadability theorem; Sections~16--17 (Part IV) discuss containment and conclude.

\section{Preliminaries: A Common Formal Framework}

We work throughout with Turing machines $M_x$ indexed by G\"odel numbers $x \in \N$, and write $\varphi_x$ for the partial computable function computed by $M_x$.

\begin{definition}[Alignment property]
An alignment property is a set $P$ of partial computable functions. $P$ is non-trivial if it is neither empty nor the set of all partial computable functions --- equivalently, there exist a function $f \in P$ (a `safe' behaviour) and a function $g \notin P$ (an `unsafe' behaviour). The corresponding index set is $I_P = \{x \in \N \mid \varphi_x \in P\}$. A universal alignment verifier is an algorithm that decides, given an arbitrary index $x$, whether $\varphi_x \in P$.
\end{definition}

\begin{conventionnum}[pointwise-to-global correctness]
Where, from Section~12 onward, $P$ is decomposed into pointwise conditions $P_w$ indexed by inputs $w \in \N$, we adopt throughout the standing convention that $\varphi_x \in P$ if and only if $\mathrm{behaviour}(M_x, w)$ satisfies $P_w$ for every $w \in \N$. Under this convention, a single violation at any input $w$ suffices for $\varphi_x \notin P$; this is what allows Propositions~1, 3, and Theorem~8 to infer $\varphi_x \notin P$ from a violation at one point outside an observed or tested region, without further argument at each occurrence.
\end{conventionnum}

\subsection{Narrow AI versus AGI, and the Paradox of Difficulty}

We distinguish Narrow AI from AGI in terms of logical expressivity, and this distinction, together with the paradox it generates, motivates the entire article.

\textbf{Narrow AI}: the model $M_{\mathrm{narrow}}$ operates over continuous or decidable logical structures. Systems based on Real Closed Fields (RCF), for instance, are complete and decidable (Tarski, 1951) precisely because their logic is too weak to define the structure of the natural numbers. Since they cannot represent discrete structures, self-referential paradoxes do not arise, and tractable universal certification is possible.

\textbf{AGI}: to achieve general intelligence, the architecture $M_{\mathrm{agi}}$ must express general computable functions, requiring a discrete structure. Mathematically, it contains Peano Arithmetic ($PA = \langle \N, +, \times, 0, 1, s\rangle$), formalised as $PA \subseteq F$ for any framework $F$ adequate to reason about it. It is precisely this discrete nature that permits G\"odel numbering and enables self-reference. The moment a system accepts discrete iterated multiplication, it enters the class of Turing-complete languages, where intelligence acquires its power but pays the price of undecidability.

\textit{This asymmetry --- that AGI mirrors Peano Arithmetic rather than a decidable structure precisely because recursive self-improvement is, in essence, self-reference in motion --- is what we call the Paradox of Difficulty and Self-Reference. It is stated informally here because its full formal weight only becomes visible once the dynamic framework of Part~II is in place; we return to it explicitly in Section~10.}

\begin{center}
\textbf{Part I --- The Static Case: Verifying a Fixed System}
\end{center}

This Part treats the system under verification as fixed: a single Turing machine $M_x$, evaluated once, against a single alignment property $P$. We prove that no universal algorithmic procedure can certify the safety of a highly expressive AGI infallibly, completely, and tractably, by closing the only two computational vectors available to an external verifier.

\begin{theorem}[Unverifiability Theorem of Alignment]
There is no universal algorithmic procedure capable of certifying the safe behaviour of a highly expressive AGI infallibly, completely, and tractably.
\end{theorem}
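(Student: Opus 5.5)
The proof proceeds by exhaustion over the only two computational vectors open to a verifier: evaluation over the unbounded input domain $\N$, and evaluation over finite hardware configurations. Throughout, ``highly expressive'' is read in the sense of Section~2: the class of candidate systems contains, for every partial computable function, an index computing it. ``Safe behaviour'' is a non-trivial alignment property $P$ in the sense of Definition~1. The goal is to show that each vector forces the sacrifice of at least one of soundness (infallibility), completeness, or tractability.

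\textbf{Infinite vector.} A universal alignment verifier would decide $I_P$. Since $P$ is non-trivial, Rice's Theorem makes $I_P$ undecidable. To make this self-contained, I would give the standard reduction from the Halting Problem $\HP$. Assume without loss of generality that $\femp \notin P$, where $\femp$ is the everywhere-undefined function; otherwise pass to the complement property. Fix $f \in P$. By the $s$-$m$-$n$ theorem, build a computable $h$ such that $\varphi_{h(x)}$ first runs $M_x$ on its own index and then outputs $f(w)$. Then $x \in \HP$ if and only if $h(x) \in I_P$. Hence any total, correct decider fails on some index.

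A verifier might instead try to weaken decision to proof within a formal framework $F \supseteq PA$. I would then invoke G\"odel's First Incompleteness Theorem, or equivalently the fact that the theorems of $F$ are $\RE$. If $F$ is sound and $\RE$, it cannot prove every true instance of ``$\varphi_x \in P$'' for properties such as totality-based safety, whose index sets are not $\RE$. So soundness forces incompleteness.

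\textbf{Finite vector.} Restrict attention to finite configurations: safety expressed as a first-order sentence $\psi_P$, evaluated over all finite structures modelling bounded hardware. Certifying safety on all finite configurations means deciding finite validity. By Trakhtenbrot's Theorem, finite validity of first-order sentences over a vocabulary containing a binary relation is undecidable. More precisely, the set of finitely valid sentences is $\coRE$-complete. Finite satisfiability is $\RE$, since one can enumerate finite models, so finite validity is $\coRE$; hardness comes from reducing the complement of $\HP$ by encoding halting computations as finite models.

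If one instead fixes the size of the hardware, the problem becomes decidable, but I would show it is $\PSPACE$-hard. This follows by encoding QBF, or from first-order model checking with the formula as part of the input. Any algorithm that is sound and complete must therefore either fail to terminate on some inputs (the $\coRE$ barrier) or run in super-polynomial time unless $\PClass = \PSPACE$ (the tractability barrier).

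\textbf{Conclusion and main obstacle.} Combining the two vectors, every candidate procedure sacrifices at least one of the three desiderata, which is the theorem. The main obstacle is making the statement's informal words precise so that the case split is genuinely exhaustive. In particular, ``tractably'' must be read as ``in polynomial time, under $\PClass \neq \PSPACE$'' in the finite bounded case, and the finite encoding of $P$ as $\psi_P$ must be shown faithful. I would address this first by defining the three desiderata formally and deferring the detailed stratification to the Finite Structural Unverifiability result announced in the introduction.
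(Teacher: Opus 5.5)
Your proposal is correct and follows essentially the same route as the paper. Like Theorem~5 and Section~6.2, it exhausts the unbounded-domain vector via Rice and G\"odel and the finite vector via Trakhtenbrot, and your $\PSPACE$-hardness step for fixed-size hardware matches the refinement the paper records separately as Clause~1 of Theorem~6. The differences are only in detail: your G\"odel step plays the recursive enumerability of $F$'s theorems against a non-$\RE$ index set instead of using a diagonal sentence, you obtain $\PSPACE$-hardness from $\FO$ model checking/QBF instead of $\PFP$ descriptive complexity, and your reduction is from the self-halting set $K$, which you should not call $\HP$ since the paper reserves that name for the pair set.
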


\begin{remarknum}[terminology: unverifiability, not undecidability]
The term `unverifiability' as a named limitation of advanced AI is due to Yampolskiy (2017), who introduced it broadly and informally for the impossibility of fully verifying proofs, software, and the behaviour of intelligent agents. We give the term a precise formal content: an umbrella for two specific and provable computational obstructions, $\coRE$-completeness and $\PSPACE$-hardness. We use `unverifiability' rather than `undecidability' as the umbrella term for this article's central results, including in the title, because it is the more accurate of the two. Undecidability names precisely one of the two obstructions established below: the $\coRE$-completeness of universal finite validity (Theorem~4, Corollary~2; Theorem~6's Clause~2). The other obstruction, established via descriptive complexity (Section~7; Theorem~6's Clause~1), is $\PSPACE$-hardness --- a fixed finite structure's safety property is, in the worst case, decidable but computationally intractable, not undecidable. `Unverifiability' is the term that correctly covers both: no verifier can be infallible, complete, and tractable, whether the specific obstacle in a given case is outright undecidability or merely (but insurmountably) intractability. The same holds one level up, in Part~II: Theorem~7 is a genuine undecidability result, but the practical schemes of Part~III fail for a mixture of decidability and resource reasons (Propositions~1--4), so `unverifiability' remains the accurate umbrella term throughout. Neither classical term suffices on its own --- `undecidability' captures only the $\coRE$ obstruction, and the $\PSPACE$ obstruction is intractability, not undecidability --- so the formal notion of `unverifiability' is not a stylistic preference but the term the results themselves require.
\end{remarknum}

The proof of Theorem~1 is developed across Sections~3--6 and completed in Section~6.2, by exhausting the two computational vectors available to a verifier: infinite means (open domains), where the obstruction is G\"odel's Incompleteness and Rice's undecidability, and finite means (hardware grounding), where the obstruction is Trakhtenbrot's Theorem.

\section{The Semantic Barrier: Proof via Rice's Theorem}

We first establish impossibility over open (infinite) domains.

\begin{theorem}[Rice, 1953]
Let $P$ be a non-trivial property of partial computable functions. Then $I_P$ is undecidable.
\end{theorem}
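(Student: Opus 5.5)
The plan is the classical reduction from the Halting Problem to $I_P$, using the s-m-n theorem to build the reducing function. I will take as given that $K = \{x \mid \varphi_x(x)\downarrow\}$ (equivalently $\HP$) is undecidable, and that the s-m-n theorem supplies, for any partial computable two-argument function, a total computable function producing indices of its sections.

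First I normalise the position of the everywhere-undefined function $\femp$. Since $I_P$ is decidable if and only if its complement $I_{\bar P}$ is, and $\bar P$ is non-trivial whenever $P$ is, I may assume without loss of generality that $\femp \notin P$; otherwise I replace $P$ by its complement. Non-triviality then provides a function $f \in P$, fixed together with an index $e$ with $\varphi_e = f$. Note that $f \neq \femp$.

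Next I define the partial computable function
\[
\psi(x,y) = \begin{cases} \varphi_e(y) & \text{if } \varphi_x(x)\downarrow,\\ \uparrow & \text{otherwise.}\end{cases}
\]
It is computed by first simulating $M_x$ on $x$ and, if that halts, then running $M_e$ on $y$. By the s-m-n theorem there is a total computable $h$ with $\varphi_{h(x)}(y) = \psi(x,y)$. Two cases follow. If $x \in K$, then $\varphi_{h(x)} = f \in P$, so $h(x) \in I_P$. If $x \notin K$, then $\varphi_{h(x)} = \femp \notin P$, so $h(x) \notin I_P$. Hence $x \in K \iff h(x) \in I_P$, so $h$ is a many-one reduction $K \le_m I_P$. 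A decider for $I_P$ composed with $h$ would therefore decide $K$, which is a contradiction.

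The one point requiring care is the WLOG step. I must make sure that passing to the complement is legitimate: decidability is closed under complementation, and an index set of a complement property is the complement of the index set. I must also check that $\psi$ is genuinely partial computable, which it is because the sequential simulation is effective. Everything else is routine, so I expect no real obstacle beyond stating the s-m-n theorem in the form used.
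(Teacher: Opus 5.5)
Your proposal is correct and takes essentially the same approach as the paper's Appendix~A.1 proof: a many-one reduction from halting, via a machine that first runs a halting computation and then behaves like a fixed $f \in P$, so that it computes either $f$ or $\femp \notin P$. The differences are minor. You reduce from the diagonal set $K$ rather than the two-argument $\HP$, you make the s-m-n step explicit, and your passage to the complement property $\bar P$ is a cleaner version of the paper's informal instruction to ``exchange the roles of $q$ and $\femp$'' when $\femp \in P$.
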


\noindent\textit{Proof:} a standard many-one reduction from the Halting Problem (see Appendix~A.1).

\begin{corollary}[Semantic inviability of alignment over open domains]
Evaluating any non-trivial alignment property of a Turing-complete AGI is undecidable. No effective verifier can certify safe behaviour over the full input domain.
\end{corollary}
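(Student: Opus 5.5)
The corollary is a direct specialisation of Theorem~2 (Rice) to the alignment setting of Definition~1. First, I will fix what ``evaluating a non-trivial alignment property of a Turing-complete AGI'' means. An alignment property $P$ is a set of partial computable functions, and ``non-trivial'' means there exist $f \in P$ and $g \notin P$. A Turing-complete AGI architecture is one whose admissible programs range over all indices $x \in \N$, or at least over an acceptable numbering that realises every partial computable function. Under this reading, a universal alignment verifier, in the sense of Definition~1, is exactly a total algorithm deciding membership in $I_P = \{x \mid \varphi_x \in P\}$.

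The argument is then by contradiction. Suppose an effective verifier $V$ certifies safe behaviour over the full input domain, i.e.\ for every $x$ it halts and outputs ``safe'' iff $\varphi_x \in P$. Then $V$ decides $I_P$. Since $P$ is non-trivial, Theorem~2 says $I_P$ is undecidable, so no such $V$ exists. For intuition, and to connect with Appendix~A.1, I would also sketch the underlying reduction. Assume without loss of generality that the nowhere-defined function $\femp \notin P$; otherwise pass to the complement $\overline{P}$, which is also non-trivial, and note that deciding $I_{\overline{P}}$ is equivalent to deciding $I_P$. Given $(e,w)$, build by the s-m-n theorem an index $h(e,w)$ computing: ``run $M_e$ on $w$; if it halts, compute $f(y)$'' for a fixed safe $f \in P$. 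Then $\varphi_{h(e,w)} = f$ if $M_e$ halts on $w$, and $\varphi_{h(e,w)} = \femp$ otherwise, so a verifier for $P$ would decide $\HP$.

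The second sentence of the corollary, ``over the full input domain'', needs a short remark. The failure comes from $P$ being a semantic property of the whole function $\varphi_x$ on the unbounded domain $\N$. By the pointwise convention, a violation at any single input suffices for $\varphi_x\notin P$. So the verifier cannot escape by restricting attention to finitely many inputs, since the reduction places the decisive behaviour behind an unbounded halting condition.

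The main obstacle is interpretive rather than technical: justifying that a ``Turing-complete AGI'' really ranges over all indices, so that Rice's hypothesis applies. If the AGI's program space is only a proper subset $S\subsetneq\N$, Rice need not apply directly. I would handle this by requiring that $S$ be closed under the s-m-n construction above, i.e.\ that the AGI can simulate an arbitrary $M_e$ on $w$ and then behave as $f$. This is precisely what Turing-completeness in the sense of Section~2.1 guarantees, and with it the reduction goes through unchanged inside $S$.
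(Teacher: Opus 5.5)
Your proposal is correct and follows the paper's route: the corollary is stated as an immediate specialisation of Theorem~2, and the proof of Theorem~2 in Appendix~A.1 is the same halting reduction you sketch, with your passage to the complement when $\femp \in P$ equivalent to the paper's ``exchange the roles'' remark. Your appeal to the pointwise convention is unnecessary, since Rice's Theorem requires no decomposition of $P$ into conditions $P_w$.
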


\textit{Note on scope: Rice's Theorem concerns semantic properties, determined solely by the input/output behaviour of $M_x$, independently of its internal description. Structural properties (e.g.\ `the description of $M_x$ has at most 100 states') are decidable and fall outside the theorem. For alignment, the relevant properties are precisely semantic: safety is about what the system does, not about its source code.}

\section{The Expressivity Barrier: From Completeness to G\"odel's Incompleteness}

The undecidability established by Rice concerns the behaviour of a running machine over all inputs. One might try to sidestep this with a static approach: fix a formal framework $F$ prior to execution and prove $F \vdash \varphi_{\mathrm{safe}}$. This section shows that the choice of $F$ itself leads to an inescapable dilemma between descriptive adequacy and deductive completeness --- a collapse whose precise form is G\"odel's Incompleteness.

A framework $F$ intended to reason about an AGI must, at minimum, express the structure of the natural numbers, formalised as $PA \subseteq F$. The designer faces a strict dichotomy. Under First-Order Logic ($\FO$), G\"odel's Completeness Theorem (1929) guarantees a sound and complete proof calculus, but by the L\"owenheim--Skolem Theorem, $\FO$ cannot uniquely characterise $\N$ --- any $\FO$ theory with an infinite model has unintended, non-standard models, so $\FO$ is too weak to adequately constrain reasoning about a recursive AGI and does not contain $PA$. Under Second-Order Logic ($\SO$), quantification over sets and relations permits a categorical axiomatisation of $\N$, closing the gap left by $\FO$ --- but $\SO$ has no sound and complete proof calculus, a direct consequence of G\"odel's Incompleteness once $\SO$ can express arithmetic. Gaining descriptive power entails losing syntactic completeness. Any $F$ expressive enough to model a Turing-complete AGI must satisfy $PA \subseteq F$, placing $F$ squarely within the scope of G\"odel's theorems.

\begin{theorem}[G\"odel Incompleteness, 1931]
Let $F$ be a consistent, effectively generated formal system such that $PA \subseteq F$. Then $F$ is incomplete: there exist sentences $\varphi$ such that $F \nvdash \varphi$ and $F \nvdash \neg\varphi$. Moreover, the consistency of $F$ cannot be proved within $F$.
\end{theorem}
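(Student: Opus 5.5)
The plan is the classical Gödel--Rosser argument, organised around the diagonal lemma. I interpret $PA \subseteq F$ as saying that $F$ proves every axiom of Robinson-style arithmetic, or of full Peano Arithmetic. That assumption gives me two tools. First, every primitive recursive relation is representable in $F$. Second, because $F$ is effectively generated, the proof relation $\mathrm{Prf}_F(p,\ulcorner\varphi\urcorner)$ is primitive recursive, so it is represented by a formula I will still write $\mathrm{Prf}_F(y,x)$, and $\mathrm{Prov}_F(x) := \exists y\,\mathrm{Prf}_F(y,x)$. Next I will prove the diagonal (fixed-point) lemma. For every formula $\psi(x)$ there is a sentence $\delta$ with $F \vdash \delta \leftrightarrow \psi(\ulcorner\delta\urcorner)$. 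The construction takes $\theta(x) := \psi(\mathrm{sub}(x,x))$, with $\mathrm{sub}$ the represented substitution function, and sets $\delta := \theta(\ulcorner\theta\urcorner)$. This is the formal counterpart of the self-reference highlighted in Section~2, and the analogue of the Recursion Theorem used for Rice's Theorem in Section~3.

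For the first clause I will use Rosser's sentence. Plain consistency is not enough for Gödel's original sentence, because it needs $\omega$-consistency. So I take $\rho$ with
\[
F \vdash \rho \leftrightarrow \forall y\,\bigl(\mathrm{Prf}_F(y,\ulcorner\rho\urcorner) \rightarrow \exists z \le y\,\mathrm{Prf}_F(z,\ulcorner\neg\rho\urcorner)\bigr).
\]
Suppose $F \vdash \rho$ via a proof with code $p$. Consistency means there is no proof of $\neg\rho$, so for each $z \le p$ the false instance $\neg\mathrm{Prf}_F(\bar z,\ulcorner\neg\rho\urcorner)$ is provable. Since $F$ proves $\forall z \le \bar p \,(z = \bar 0 \vee \dots \vee z=\bar p)$, this yields $F \vdash \neg\rho$, a contradiction. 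Symmetrically, suppose $F \vdash \neg\rho$ via a proof with code $q$. Then $F$ proves that any $y$ proving $\rho$ either satisfies $y \le \bar q$, in which case it is handled by finitely many refuted instances, or satisfies $y > \bar q$, in which case $z=\bar q$ witnesses the disjunct. Hence $F \vdash \rho$, again contradicting consistency. So $\rho$ is neither provable nor refutable, which gives the required sentence $\varphi$.

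For the second clause I will formalise the first argument inside $F$, using the simpler Gödel sentence $\gamma$ with $F \vdash \gamma \leftrightarrow \neg\mathrm{Prov}_F(\ulcorner\gamma\urcorner)$. Here I need the Hilbert--Bernays--Löb derivability conditions:
\begin{itemize}
\item[(D1)] if $F\vdash\varphi$ then $F\vdash\mathrm{Prov}_F(\ulcorner\varphi\urcorner)$;
\item[(D2)] provable closure of $\mathrm{Prov}_F$ under modus ponens;
\item[(D3)] $F\vdash \mathrm{Prov}_F(\ulcorner\varphi\urcorner)\rightarrow\mathrm{Prov}_F(\ulcorner\mathrm{Prov}_F(\ulcorner\varphi\urcorner)\urcorner)$.
\end{itemize}
From these I derive $F \vdash \mathrm{Con}_F \rightarrow \gamma$, where $\mathrm{Con}_F := \neg\mathrm{Prov}_F(\ulcorner 0=1\urcorner)$. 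Then $F \vdash \mathrm{Con}_F$ would give $F\vdash\gamma$. By (D1) that gives $F \vdash \mathrm{Prov}_F(\ulcorner\gamma\urcorner)$, hence $F\vdash\neg\gamma$, contradicting consistency.

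The main obstacle is (D3), provable $\Sigma_1$-completeness, since it requires formalising inside $F$ the fact that true $\Sigma_1$ sentences are provable. This is where full $PA$ induction is used rather than just Robinson arithmetic. I will not reprove it in detail. Instead I will cite the standard verification (Hilbert--Bernays, Löb), in the same way the paper defers Rice's proof to Appendix~A.1. I will note that for the paper's purposes only the first clause is needed downstream, and that clause rests solely on representability and the diagonal lemma.
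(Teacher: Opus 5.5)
Your proof is correct, and it takes a genuinely different and more complete route than the paper's Appendix~A.2.

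The paper diagonalises on the plain G\"odel sentence $G\leftrightarrow\neg\mathrm{Prov}_F(\langle G\rangle)$. The case $F\vdash G$ is refuted by consistency. The case $F\vdash\neg G$, however, is closed only by a parenthetical appeal to $\omega$-consistency ``or the Rosser variant'', which is a stronger hypothesis than the one stated. The `Moreover' clause about the unprovability of consistency is never proved at all.

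You instead diagonalise directly on the Rosser sentence, so both cases close under plain consistency. This uses only that $F$ decides each numerical instance of the decidable proof relation, plus the numeral-bounded facts $\forall z\le\bar p\,(z=\bar 0\vee\dots\vee z=\bar p)$ and $\forall y\,(y\le\bar q\vee\bar q\le y)$, which are already available in Robinson arithmetic. You then obtain the second clause from the Hilbert--Bernays--L\"ob conditions via $F\vdash\mathrm{Con}_F\to\gamma$, and you correctly identify (D3) as the step that needs induction beyond $Q$.

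The paper's route buys brevity, at the cost of an extra hypothesis and half the statement. Yours proves exactly what is stated, modulo the cited verification of (D1)--(D3).

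Two precisions for the write-up:
\begin{enumerate}
\item If `effectively generated' means only r.e.\ axioms, invoke Craig's trick before claiming the proof relation is (primitive) recursive.
\item In the second clause, $\mathrm{Prf}_F$ must be the canonical arithmetisation, not merely some formula representing the proof relation. For example, $\mathrm{Prf}_F(y,x)\wedge\forall z\le y\,\neg\mathrm{Prf}_F(z,\ulcorner 0=1\urcorner)$ also represents the proof relation when $F$ is consistent. Yet the consistency sentence built from it is trivially provable in $F$, so (D2) or (D3) must fail for it.
\end{enumerate}
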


\noindent\textit{Proof:} the standard Diagonalisation Lemma argument (see Appendix~A.2).

Consequence for static alignment verification: any framework $F$ adequate for specifying and reasoning about a Turing-complete AGI contains $PA$ and is therefore subject to G\"odel's theorem. There exist alignment properties that are semantically true --- the AGI is genuinely safe --- yet unprovable within $F$. A static verifier operating in $F$ will therefore either be incomplete (missing genuine safety guarantees) or inconsistent (producing false certificates). As Hern\'andez-Espinosa et al.\ (2026) note in a related context, the system resides in a formal zone where the framework cannot determine the safe decision.

\section{Trakhtenbrot's Wall: The Collapse of the Finite Strategy}

Sections~3 and~4 establish impossibility over infinite domains. A natural response restricts attention to the physical world, which is finite: perhaps safety can be certified by exhaustively checking all finite hardware configurations. We call this strategy \emph{Universal Finite Verification}: the paradigm that an AGI's safety can be guaranteed by confining its operation to a bounded, finite-state environment where all possible behaviours are, in principle, verifiable. This is essentially the position taken by Melo et al.\ (2025), who argue that restricting AGI architectures to finite, halting configurations resolves the undecidability of alignment identified by Rice and G\"odel --- since a machine guaranteed to halt over a finite domain can, in principle, be checked exhaustively. This section shows that Universal Finite Verification, too, is computationally infeasible --- not at the level of any particular finite structure, but at the level of universal quantification over all finite structures. As Corollary~2 below makes precise, the finite-and-halting restriction buys only a local, single-structure decidability; it does not extend to the universal guarantee an AGI actually needs, since that guarantee must hold across whichever finite configuration the system's self-modification (Part~II) may produce next.

Let $L$ be a relational first-order vocabulary containing at least one binary relation symbol, and $\Fin(L)$ the class of all finite $L$-structures. A universal finite safety certificate for a sentence $\psi \in \FO(L)$ would establish $\modelsfin \psi \iff \forall M \in \Fin(L),\ M \models \psi$. The vocabulary used encodes a Turing machine's execution: $S(x,y)$ a successor relation, $T_q(t)$ that the machine is in state $q$ at time $t$, $H(p,t)$ that the head is at position $p$ at time $t$, $C_a(p,t)$ that cell $p$ contains symbol $a$ at time $t$.

\begin{theorem}[Trakhtenbrot, 1950]
Let $L$ contain at least one binary relation symbol. The set of finitely valid sentences $V_{\mathrm{fin}} = \{\psi \in \FO(L) \mid \modelsfin \psi\}$ is not recursively enumerable.
\end{theorem}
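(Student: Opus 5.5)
The plan is the classical route: reduce the complement of the Halting Problem to $V_{\mathrm{fin}}$, and use the fact that finite satisfiability is recursively enumerable. Concretely, I will show that the set of finitely satisfiable sentences $\mathrm{SAT}_{\mathrm{fin}}$ is r.e. but not recursive. Then $V_{\mathrm{fin}}$ cannot be r.e.: if it were, then $\mathrm{SAT}_{\mathrm{fin}} = \{\psi \mid \neg\psi \notin V_{\mathrm{fin}}\}$ would be co-r.e. as well as r.e., hence decidable. Recursive enumerability of $\mathrm{SAT}_{\mathrm{fin}}$ is routine. One enumerates all pairs $(\psi, M)$ with $M$ a finite $L$-structure, up to isomorphism, listed by increasing cardinality, and checks $M \models \psi$ by direct evaluation, which is decidable on a finite structure.

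The core step is a computable map $x \mapsto \psi_x$ such that $M_x$ halts on the empty input if and only if $\psi_x$ has a finite model. I will first work in the enriched vocabulary described just before the statement: $S$, $T_q$, $H$, and $C_a$, together with a constant or definable element $0$. The sentence $\psi_x$ is the conjunction of four groups of axioms. The first says that $S$ is the graph of an injective partial successor function on a discrete linear order with least element $0$ and no cycles, so that in a finite model the domain is $\{0, 1, \dots, n\}$, indexing both time steps and tape cells. The second encodes the initial configuration at time $0$. The third is a local transition axiom for each instruction of $M_x$: whenever $T_q(t)$, $H(p,t)$ and $C_a(p,t)$ hold and $S(t,t')$, the configuration at $t'$ is fixed as the transition table dictates, and cells away from the head are unchanged. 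The fourth says that some time $t$ satisfies $T_{q_{\mathrm{halt}}}(t)$, together with exclusivity clauses so that each time has exactly one state, head position and symbol per cell.

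Both directions then follow. If $M_x$ halts in $k$ steps, then the run, truncated to domain size $\max(k, \text{space used}) + 1$, is a finite model. Conversely, a finite model forces, by induction along $S$, that the encoded rows are the genuine configurations of $M_x$, and it contains a halting row. Since the halting set is not recursive (the reduction underlying Theorem~2 in Appendix~A.1), $\mathrm{SAT}_{\mathrm{fin}}$ is not recursive, and hence $V_{\mathrm{fin}}$ is not r.e. I expect the main care to go into the boundary behaviour. A finite model must not be allowed to cut the computation short by having time or tape run out before halting. To prevent this, I will require that every non-halting time have an $S$-successor and that the head never moves past the last element; the universe is then forced to be long enough.

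The last step is to pass from the rich vocabulary to the statement's hypothesis that $L$ contains just one binary relation symbol $E$. The standard approach is to interpret each of the finitely many relations of the construction in finite graphs by gadgets. For example, one can tag elements by attached cycles of distinct lengths and code tuples by auxiliary vertices, then relativise $\psi_x$ to the resulting FO-definable interpretation. This is tedious but standard, and it preserves finite satisfiability in both directions. Since the translation is computable, the reduction transfers. I would cite it rather than grind through it.
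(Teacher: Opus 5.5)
Your proposal follows the paper's route: encode runs of the machine with $S$, $T_q$, $H$, $C_a$ to reduce halting to finite satisfiability, then pass to $V_{\mathrm{fin}}$ via $\psi \mapsto \neg\psi$. It is more complete than the paper's sketch, since you also verify that finite satisfiability is r.e. (the paper only asserts it), handle the boundary conditions, and reduce to a single binary relation symbol, which the paper's proof omits. One small correction: `no cycles' is not first-order expressible from $S$ alone. Either add an order symbol $<$, or drop that clause: injectivity, `$0$ has no predecessor', and `every non-halting time has an $S$-successor' already force the $S$-orbit of $0$ in a finite model to end at a genuine halting configuration, which is all the converse direction uses.
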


\begin{proof}
(Many-one reduction from the Halting Problem.) Given $M = \langle Q, \Sigma, \Gamma, \delta, q_0, q_{\mathrm{halt}}\rangle$, construct $\varphi_M$ over $L = \{S, T_q, H, C_a\}$ as the conjunction of four axiomatic blocks: (i) order and domain --- $S$ is a strict discrete linear order with a minimum and no cycles, modelling a finite time/tape grid; (ii) initial configuration --- at $t_0$, head at $p_0$, state $q_0$, tape initialised with the input; (iii) transition rules --- for each $\delta(q,a) = (q', a', d)$, the axiom
\[
\forall t\, \forall p \Big[ T_q(t) \wedge H(p,t) \wedge C_a(p,t) \implies T_{q'}(S(t)) \wedge C_{a'}(p, S(t)) \wedge H(p+d, S(t)) \Big],
\]
with inertia axioms for unwritten cells; (iv) halting --- $\exists t\, T_{q_{\mathrm{halt}}}(t)$.

By construction, $\varphi_M$ has a finite model iff $M$ halts, giving a many-one reduction from $\HP$ to finite satisfiability: $M$ halts $\iff \varphi_M$ is finitely satisfiable. Finite satisfiability is thus $\RE$-hard. Finite validity is the complement of finite satisfiability of $\neg\psi$, so --- since finite satisfiability is $\RE$-complete --- its complement, $V_{\mathrm{fin}}$, is $\coRE$-complete, hence not recursively enumerable: no algorithm can even list all finitely valid sentences, let alone decide them.
\end{proof}

\begin{corollary}[Finite universality is incomputable]
Certifying safety for any specific, fixed finite structure is decidable, by exhaustive search within that structure. But certifying that a safety property holds across all possible finite hardware configurations is $\coRE$-complete, hence undecidable. This forecloses the finite-world escape from Rice and G\"odel.
\end{corollary}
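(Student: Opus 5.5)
The corollary makes two separate claims, and I will prove them separately. The first is decidability for a single fixed finite structure. The second is $\coRE$-completeness of the universal, all-structures version, which I will take from Theorem~4 by identifying ``safety across all finite hardware configurations'' with finite validity of the sentence that formalises the safety property.

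\textbf{Local decidability.} Fix a finite $L$-structure $M$ with domain of size $n$, and fix a safety sentence $\psi \in \FO(L)$. I decide $M \models \psi$ by structural recursion on $\psi$. Atomic formulas are read off the finite relation tables of $M$. Boolean connectives are evaluated directly. Each quantifier $\exists x$ or $\forall x$ is handled by ranging over the $n$ elements of the domain. This recursion always terminates, and it uses roughly $n^{k}$ steps, where $k$ is the quantifier depth of $\psi$. That settles decidability (``exhaustive search''). I will only note, and not prove here, that this cost is what later feeds the $\PSPACE$ tractability barrier.

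\textbf{Universal incomputability, upper bound.} Read a certificate across all hardware configurations as exactly $\modelsfin \psi$, that is, $\psi \in V_{\mathrm{fin}}$. I show $V_{\mathrm{fin}} \in \coRE$ by enumerating the complement. Enumerate every finite $L$-structure up to isomorphism, one domain size at a time; for each size there are only finitely many. On each structure, run the local procedure above on $\psi$, and output $\psi$ as soon as some structure falsifies it. This lists exactly the sentences that are not finitely valid, so the complement of $V_{\mathrm{fin}}$ is $\RE$.

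\textbf{Universal incomputability, hardness.} I reuse the reduction in the proof of Theorem~4: $M$ halts if and only if $\varphi_M$ is finitely satisfiable. So the complement of the Halting Problem many-one reduces to $V_{\mathrm{fin}}$ through $M \mapsto \neg\varphi_M$. This makes $V_{\mathrm{fin}}$ $\coRE$-hard, and together with the upper bound, $\coRE$-complete. If $V_{\mathrm{fin}}$ were decidable, so would be the Halting Problem, so it is undecidable. The phrase ``forecloses the finite-world escape'' then follows: any universal finite certifier would decide $V_{\mathrm{fin}}$.

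\textbf{Main obstacle.} The delicate step is the modelling link. I must argue that the safety property of an AGI over all finite configurations really is a finite-validity question, and specifically that every instance $\neg\varphi_M$ arising from the reduction counts as a legitimate ``safety sentence.'' Otherwise hardness would hold only for the general validity problem and not for the problem of certifying safety.

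My first attempt is to argue that the vocabulary $\{S, T_q, H, C_a\}$ is precisely the execution-trace vocabulary of the hardware. Then ``the machine never reaches a bad state on any finite run'' is expressed by sentences of the form $\neg\varphi_M$, where $q_{\mathrm{halt}}$ plays the role of the unsafe state. Under that reading, the class of safety sentences contains the hard instances.

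A secondary check is that the $\RE$-membership argument in the proof of Theorem~4 is used correctly. Finite satisfiability is $\RE$ because one can search through finite models, and this fact is exactly what the upper bound above makes explicit.
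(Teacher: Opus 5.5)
Your proposal is correct and follows the paper's route. The paper obtains Corollary~2 directly from Theorem~4: exhaustive search settles the fixed-structure case, and the reduction ``$M$ halts $\iff \varphi_M$ is finitely satisfiable'' settles the universal case. You additionally spell out the $\coRE$ upper bound by enumerating finite countermodels, and you give the reduction in the correct direction, $M \mapsto \neg\varphi_M$ from the complement of $\HP$, both of which the paper only asserts. The modelling link you flag as the main obstacle is already settled in the paper by definition, since it identifies a universal finite safety certificate for an arbitrary $\psi \in \FO(L)$ with $\modelsfin \psi$.
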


\begin{remarknum}[against the halting-machine escape]
Even granting Melo et al.'s (2025) premise in full --- that the AGI is restricted to machines guaranteed to halt over a finite domain --- Corollary~2 shows this does not resolve alignment verification: it only relocates the difficulty from the local level (decidable, by exhaustive search) to the universal level ($\coRE$-complete). And Theorem~6 below shows the local level is not innocuous either, since even a single fixed finite structure can be $\PSPACE$-hard to check in the worst case. Restricting to finite, halting machines removes the G\"odelian obstruction of Section~4, but not the Trakhtenbrot obstruction of this section, nor the complexity obstruction of Section~7.
\end{remarknum}

\section{Formal Conclusion of the Static Impossibility}

\subsection{Exhaustiveness of the Two Vectors}

Any algorithmic verification strategy for a safety property $P$ of an AGI $M_x$ operates with respect to some domain of evaluation $D$, and there are exactly two jointly exhaustive, mutually exclusive cases: $D$ infinite (Vector~1, Infinite Means) or $D$ finite (Vector~2, Finite Means). Intermediate-looking strategies reduce to one of the two: a verifier restricted to a fixed finite domain falls under Vector~2 and its certificate does not generalise; a verifier quantifying over a parameterised family of finite domains (all configurations up to size $n$, for all $n$) is equivalent to quantification over all finite structures --- precisely Vector~2, governed by Theorem~4; a verifier using probabilistic methods over infinite input spaces remains within Vector~1, blocked by Theorem~2 at the level of exact guarantees.

\begin{theorem}[Exhaustiveness and impossibility]
Any verification strategy for a non-trivial alignment property of a Turing-complete AGI either (a) operates over an unbounded domain, in which case it is blocked by Rice's Theorem and G\"odel's Incompleteness, or (b) operates over a bounded domain, in which case its certificate cannot be universal (Trakhtenbrot). In either case, no infallible, complete, and tractable verifier exists.
\end{theorem}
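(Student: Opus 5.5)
The proof is a case analysis on the domain of evaluation $D$ fixed by the verification strategy, following Section~6.1. First I make the dichotomy precise. A strategy for $P$ is modelled as an algorithm $V$ which, given an index $x$, outputs a verdict about whether $\varphi_x \in P$, and whose correctness is measured against the set $D$ of inputs or structures over which $P$ is evaluated. Since $D$ is either infinite or finite, cases (a) and (b) are exhaustive and mutually exclusive. Intermediate strategies are absorbed as in the preceding paragraph. A parameterised family of finite domains of all sizes $n$, taken over all $n$, is quantification over $\Fin(L)$ and so falls under (b). Probabilistic sampling over an infinite space still targets an exact verdict on an infinite $D$, so it falls under (a).

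\textbf{Case (a).} Here I argue by contradiction. Suppose $V$ were infallible (sound) and complete over an unbounded domain for a non-trivial $P$. Then $V$ would decide $I_P$, which contradicts Theorem~2 via Corollary~1, so no tractability consideration even arises. Someone might try to evade this by replacing the decision procedure with proof search in a fixed framework $F$. Adequacy forces $PA \subseteq F$, so Theorem~3 applies: either some true safety statement is unprovable, so $V$ is incomplete, or $F$ is inconsistent, so $V$ is not infallible.

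\textbf{Case (b).} I split according to whether the certificate is local or universal. If it concerns one fixed finite structure, Corollary~2 gives decidability, but the certificate covers only that structure and is not a certificate of $\varphi_x \in P$ for the AGI. The AGI's behaviour ranges over configurations that are not fixed in advance. Tractability at this local level is deferred to the $\PSPACE$-hardness of Section~7 and Theorem~6. If instead the certificate is universal over $\Fin(L)$, I encode the safety condition as an $\FO(L)$ sentence $\psi$ using the Turing-machine vocabulary of Section~5. An infallible and complete $V$ would then decide $V_{\mathrm{fin}}$, which contradicts Theorem~4, since that set is $\coRE$-complete and in particular not decidable.

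The main obstacle is justifying that every strategy yields a well-defined $D$, and that a ``universal'' bounded certificate really reduces to finite validity. The second point needs the encoding to be uniform: finite validity of arbitrary sentences $\psi$ must reduce many-one to the verifier's task for some alignment property. I would secure this by taking the reduction $M \mapsto \varphi_M$ from the proof of Theorem~4 and reading the sentence $\neg\varphi_M$ as the safety property ``never reaches $q_{\mathrm{halt}}$''. I would also state explicitly that the exhaustiveness claim is a modelling stipulation rather than a theorem. With these points in place, the two cases together give the conclusion of the theorem.
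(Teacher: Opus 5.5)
Your proposal follows the paper's own proof: the same infinite/finite split of the evaluation domain, with Theorem~2 (and Theorem~3 for a static framework $F \supseteq PA$) in the unbounded case, and the same local-versus-universal split in the bounded case, using Corollary~2 and Theorem~4. Your extra care in making the reduction to finite validity uniform, by reading $\neg\varphi_M$ from the proof of Theorem~4 as a safety condition, and in stating exhaustiveness as a modelling stipulation, makes explicit what the paper's proof leaves implicit rather than changing the argument.
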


\begin{proof}
Let $V$ be any algorithmic verifier for a non-trivial $P$. If $V$ operates over an unbounded domain: by Theorem~2, $I_P$ is undecidable, so $V$ cannot be both sound and complete; by Theorem~3, if $V$ relies on a static framework $F \supseteq PA$, $F$ is incomplete. If $V$ operates over a bounded domain: if $V$ certifies only a specific finite structure, the certificate does not generalise; if $V$ certifies universal finite validity, then by Theorem~4 the relevant set is not $\RE$, so $V$ cannot be algorithmic. Since the two cases jointly exhaust all domains, no universal verifier exists for either.
\end{proof}

\subsection{Conclusion of the Unverifiability Theorem}

Combining Section~6.1 with the results of Sections~3--5, Theorem~1 is proved by negatively resolving the only two possible computational vectors. Conclusion of Vector~1: by Theorem~2, dynamically verifying alignment in open domains is undecidable; by Theorem~3, any sufficiently expressive static framework suffers an unavoidable fracture between truth and provability. Conclusion of Vector~2: by Theorem~4, bypassing incomputability through a finite universe collapses, since universal finite validity is $\coRE$-complete and hence incomputable. By Theorem~5, these two vectors exhaust every possible verification strategy; since neither sustains global verification, no infallible, complete, and tractable alignment procedure can exist for general Turing-complete architectures. $\blacksquare$

\section{Descriptive Complexity and Computational Barriers}

The gap between Alignment and Control can be understood by linking logical language directly to computational difficulty. Descriptive Complexity, founded by Fagin and Immerman, shows that the language required to describe a problem dictates the difficulty of verifying it, over ordered finite structures: First-Order Logic ($\FO$) is too weak to express transitive closure or basic reachability; Least Fixed-Point Logic ($\LFP$), adding monotone recursion, corresponds exactly, by the Immerman--Vardi theorem, to the tractable class $\PClass$; Second-Order Logic ($\SO$), quantifying over relations and sets, corresponds by Fagin's Theorem exactly to $\NP$; Partial Fixed-Point Logic ($\PFP$), with unrestricted non-monotone recursion, corresponds over ordered finite structures exactly to $\PSPACE$.

To model, specify, and audit the global and emergent behaviour of a recursively self-improving AGI, a verification framework cannot settle for $\LFP$: general alignment specifications inherently require $\PFP$ or higher to capture unbounded recursive dependencies. By the Immerman--Vardi and Vardi--Abiteboul theorems, this jump in expressivity transports verification from tractable $\PClass$ into $\PSPACE$ or worse, so the monitor's task transitions from an efficient polynomial-time audit to intractable combinatorial explosion.

\subsection{Intractability in Finitude}

The perfect illustration is an $8\times 8$ chessboard. Chess avoids G\"odel's problem of the mathematical infinite: the game tree is strictly finite ($\sim 10^{44}$ positions), and by backward induction from the final states, the game is decidable --- a correct response always exists for every position. Yet the combinatorial magnitude of this finite tree makes evaluation intractable (the generalised resolution of chess on $N \times N$ boards is $\PSPACE$-complete). The problem can be perfectly described, and the space is not infinite, yet no monitor in the universe could exhaust these states before the heat death of the universe. Finding an efficient supervising algorithm here is equivalent to resolving whether $\PClass = \PSPACE$. Until that equivalence is settled, full expressivity inevitably equals blind intractability.

\section{The Finite Structural Unverifiability Theorem and the Trilemma}

This section refines the finite-means conclusion of Theorem~1 (via Theorem~4) by stratifying the bounded-domain case into a tractability barrier and a decidability barrier.

\begin{theorem}[Finite Structural Unverifiability of AGI Alignment]
Let $M$ be a recursively self-improving agent architecture whose runtime code-modification dynamics require the expressive power of $\PFP$. Let $\phi \in \PFP$ be a global alignment specification and $F$ the class of all ordered finite models representing hardware and operational environments. No verification framework $V$ can simultaneously satisfy Soundness $(S)$, Completeness $(C)$, and Tractability $(T)$. This is structurally partitioned into two insurmountable limits:

\begin{enumerate}[label=(\arabic*)]
\item \textbf{The Local Boundary (Fixed Environments):} in the worst case over specifications expressible in $\PFP$, deciding whether the agent satisfies the property over a fixed structure $A \in F$ is strictly $\PSPACE$-hard under descriptive complexity --- there exist specifications $\phi' \in \PFP$ for which no sub-exponential verification procedure exists, assuming $\PClass \neq \PSPACE$.
\item \textbf{The Universal Boundary (Trakhtenbrot's Wall):} the universal finite verification problem, $\mathrm{UnivVerify}(M,\phi) \iff \forall A \in F,\ (M,A) \models \phi$, is strictly $\coRE$-complete, restating Theorem~4 in the $\PFP/\PSPACE$ notation. Guaranteeing universal safety over bounded domains is algorithmically uncomputable.
\end{enumerate}
\end{theorem}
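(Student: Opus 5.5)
The plan is to prove the two clauses separately and then derive the trilemma from them. The trilemma is the conjunction ``no $V$ is simultaneously $S$, $C$ and $T$'', and it follows by cases on which family of finite structures $V$ is asked to certify: either a single fixed $A\in F$ (Clause~1) or the whole class $F$ (Clause~2). These two cases exhaust the bounded-domain vector, as already argued in Section~6.1. In the first case a sound and complete $V$ cannot be tractable, conditional on $\PClass\neq\PSPACE$. In the second case a sound and complete $V$ cannot exist as an algorithm at all. So in every case one of $S$, $C$, $T$ fails.

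\textbf{Clause~1.} I would invoke the Vardi--Abiteboul characterisation that $\PFP$ captures $\PSPACE$ over ordered finite structures. By the capture direction, some $\PSPACE$-complete property of ordered structures is $\PFP$-definable; I would take QBF, encoded as ordered structures, or generalised geography. Let $\phi'$ be its $\PFP$ definition. Any sound and complete verifier deciding $A\models\phi'$ for a given $A\in F$ then decides a $\PSPACE$-complete problem, which makes the model-checking problem $\PSPACE$-hard in its worst case. If $V$ ran in polynomial time, this would give $\PClass=\PSPACE$. One wording point needs care. The statement says ``no sub-exponential procedure'', but the reduction only yields ``no polynomial procedure''. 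I would prove the hardness claim under $\PClass\neq\PSPACE$ and state the sub-exponential form under the correspondingly stronger assumption, for instance that $\PSPACE$ is not contained in sub-exponential time, noting this explicitly in the proof.

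\textbf{Clause~2.} Here I would reduce to Theorem~4.

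\emph{Hardness.} Every $\FO(L)$ sentence is a $\PFP$ sentence. Taking $M$ trivial, $\mathrm{UnivVerify}(M,\psi)$ therefore coincides with finite validity of $\psi$, and the Halting-Problem reduction in the proof of Theorem~4 already produces ordered structures via the successor relation $S$. Restricting to ordered models loses nothing, because the order can be axiomatised inside the sentence. This gives $\coRE$-hardness.

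\emph{Membership.} The complement, $\exists A\in F$ with $(M,A)\not\models\phi$, is $\RE$. The algorithm enumerates finite ordered structures $A$ and, for each one, evaluates $\phi$ on $(M,A)$. The evaluation terminates because $\PFP$ model checking on a fixed finite structure is decidable: the non-monotone iteration over the $2^{|A|^k}$ candidate relations either stabilises or cycles, and the cycle is detectable within that many steps. This requires $(M,A)$ to be a finite structure computable from $M$ and $A$. I take that to be the intended meaning of ``$M$ runs on hardware $A$'', and I would make it an explicit hypothesis. Hardness and membership together give $\coRE$-completeness, and hence undecidability.

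The main obstacle is this membership half of Clause~2. The statement leaves the pair $(M,A)$ informal, and $\coRE$ membership fails if evaluating $M$ on $A$ can diverge. I would handle this by formalising $(M,A)$ as the finite expansion of $A$ by the $\PFP$-definable transition relations of $M$, so that satisfaction reduces to ordinary $\PFP$ model checking on a finite structure. The rest is assembly of cited theorems.
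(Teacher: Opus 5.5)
Your proposal is correct and follows essentially the paper's route: Clause~1 comes from the Vardi--Abiteboul capture of $\PSPACE$ by $\PFP$ on ordered structures (Section~7), Clause~2 is Theorem~4 transported to ordered models and $\PFP$ specifications, and the trilemma follows by the same argument as Corollary~3. The paper asserts these steps rather than proving them, so two of your additions genuinely tighten it: the explicit $\coRE$-membership argument via decidable $\PFP$ model checking, and the observation that $\PClass\neq\PSPACE$ rules out only polynomial-time verifiers, not sub-exponential ones --- and that polynomial-time form is all Corollary~3 actually uses.
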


\begin{remarknum}
Clause~1 does not imply every $\phi \in \PFP$ is intractable --- only that critical, worst-case specifications exist that bound any universal verification framework within $\PSPACE$-hard limits.
\end{remarknum}

In summary, Theorem~6 establishes a hybrid computational barrier: within any isolated, fixed finite environment, monitoring is technically decidable but $\PSPACE$-hard, demanding an impossible combinatorial explosion of memory; the moment a safety framework attempts to generalise across the entire class of finite operational environments, the problem breaches Trakhtenbrot's Wall into pure, $\coRE$-complete undecidability.

\begin{corollary}[Structural Derivation of the Soundness--Completeness--Tractability Trilemma]
As a direct consequence of Theorem~6, for any verification architecture $V$, the simultaneous intersection of Soundness, Completeness, and Tractability is empty: $\forall V,\ \neg(S(V) \wedge C(V) \wedge T(V))$.
\end{corollary}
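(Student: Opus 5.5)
The plan is to derive the Trilemma by a case split on the domain of $V$, exactly as Theorem~5 splits on the two vectors, and to invoke one clause of Theorem~6 in each branch. First I fix the meanings of the three predicates so the two clauses connect to them. $S(V)$ will mean that $V$ never certifies a pair $(M,\phi)$ that is in fact unsafe. $C(V)$ will mean that $V$ certifies every pair that is genuinely safe over the class $F$ it claims to cover. $T(V)$ will mean that $V$ halts on every input within a polynomial bound in the size of the input structure and specification. I will then argue by contradiction: suppose some $V$ satisfies $S(V)\wedge C(V)\wedge T(V)$.

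\textbf{Case 1: universal target.} Suppose $V$ is meant to certify $\forall A\in F,\ (M,A)\models\phi$. Then $S$ and $C$ together make $V$'s accepting set equal to the set of universally safe pairs. $T$, or even just totality, makes that set decidable. This contradicts Clause~2 of Theorem~6, which says the set is $\coRE$-complete. I will make explicit that Clause~2 rests on Theorem~4 through the reduction $\psi\mapsto(M_\psi,\psi)$ from finite validity. In particular, a decision procedure for $\mathrm{UnivVerify}$ would decide $V_{\mathrm{fin}}$, which is not even $\RE$. So in this branch, $S\wedge C$ alone already rules out any total $V$, not merely a tractable one.

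\textbf{Case 2: local target.} Suppose $V$ only certifies $(M,A)\models\phi$ for a fixed $A\in F$. Clause~1 supplies a specification $\phi'\in\PFP$ whose model-checking problem is $\PSPACE$-hard. On inputs of the form $(M,A,\phi')$, the conditions $S\wedge C$ make $V$ a correct decider for this problem, and $T$ makes it a polynomial-time decider. That would place a $\PSPACE$-hard problem in $\PClass$, giving $\PClass=\PSPACE$ and contradicting the hypothesis built into Clause~1. The two cases exhaust the possibilities, by the same argument as in Section~6.1: a parameterised family of finite domains collapses into Case~1. Hence the intersection $S(V)\wedge C(V)\wedge T(V)$ is empty.

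The main obstacle is Case~2. There the conclusion is only conditional on $\PClass\neq\PSPACE$, and it holds only in the worst case over $\PFP$ specifications, as the Remark after Theorem~6 concedes. I would therefore state the corollary with the standing assumption $\PClass\neq\PSPACE$ made explicit. I would also read ``any verification architecture'' as one that must handle all $\PFP$ specifications, including $\phi'$, and not one tailored to a single tractable $\phi$. Otherwise the local branch fails: for an $\LFP$-definable $\phi$, the Immerman--Vardi theorem gives a sound, complete and polynomial checker. If that reading is not acceptable, the fallback is to rest the unconditional part of the claim on Case~1 alone.
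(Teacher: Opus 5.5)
Your proposal is correct and follows essentially the paper's own route. You assume some $V$ satisfies $S\wedge C\wedge T$, observe that soundness plus completeness make $V$ an exact decider, and refute this with Clause~2 of Theorem~6 (undecidability of $\mathrm{UnivVerify}$) and Clause~1 (worst-case $\PSPACE$-hardness, under $\PClass\neq\PSPACE$). Your explicit case split on universal versus local target, and your flagging of the $\PClass\neq\PSPACE$ hypothesis and of the worst-case reading of ``any verification architecture'', just spell out what the paper's proof compresses into ``violating $\PClass$ by either count''.
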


\begin{proof}
Assume for contradiction that some $V$ satisfies $S \wedge C \wedge T$. A framework that is both sound and complete constitutes a perfect decision procedure for the alignment relation; Tractability forces this procedure to run in deterministic polynomial time, so $\mathrm{UnivVerify}(M,\phi) \in \PClass$. This contradicts Theorem~6: by Clause~1, the fixed-architecture problem is $\PSPACE$-hard in the worst case, and Completeness forces $V$ to decide these worst-case instances correctly, which under $\PClass \neq \PSPACE$ cannot run in polynomial time; by Clause~2, the universal problem is $\coRE$-complete, hence undecidable, admitting no halting decision procedure at all --- violating $\PClass$ by either count. The assumption collapses the complexity hierarchy and must be false: the intersection of $S$, $C$, and $T$ is empty.
\end{proof}

This trilemma sharpens, into a structural rather than statistical necessity, the semantic Pareto frontier of alignment certification identified by Agarwal (2026) over open-ended domains: because the verification task is bounded by $\PSPACE$-hard and $\coRE$-complete constraints even in the finite case, no framework can simultaneously satisfy all three properties, in either the open or the bounded setting.

\begin{remarknum}[a recurring pattern: removing the escape hatch]
Both this result and the Remark against the halting-machine escape in Section~5 follow the same argumentative pattern. Melo et al.\ (2025) restrict the AGI to finite, halting machines to escape Rice and G\"odel; Agarwal (2026) restricts the analysis to the statistical, sample-based setting to characterise achievable guarantees over open domains. In both cases, the restriction looks like it removes the obstruction --- finiteness removes self-reference over an infinite domain; statistical framing removes the need for exact, infallible certification. We show in each case that the obstruction survives the restriction anyway: finiteness relocates undecidability from the universal quantifier (Corollary~2) rather than eliminating it, and the trilemma holds even without any statistical uncertainty, from descriptive complexity alone (Corollary~3). Neither halting nor sampling was ever the source of the difficulty; expressivity was --- this is the Expressivity Invariant in its first guise, before the dynamic and supervisory levels make it explicit.
\end{remarknum}

\begin{center}
\textbf{Part II --- The Dynamic Case: When the System Evolves}
\end{center}

Part I concerns verification of a fixed system: a single $M_x$, evaluated once, against a single $P$. This is a substantial simplification. The Remark on the Paradox of Difficulty and Self-Reference (Section~2) observed informally that an AGI's capacity for self-reference is precisely what exposes it to G\"odelian incompleteness in the first place. We now make this observation precise and extend it along the temporal axis: rather than asking whether a fixed system is safe, we ask whether a verified safety guarantee persists across the system's own evolution --- and, if not, exactly where and why it fails to.

\section{A Formal Framework for Self-Modifying Systems}

\begin{definition}[Transition operator]
A transition operator is a total computable function $\Phi : \N \to \N$. Given an index $x$, $\Phi(x)$ is the index of the machine obtained from $M_x$ by one step of self-modification. We impose no restriction on $\Phi$ beyond computability and totality, to capture self-improvement in its most general, unrestricted form.
\end{definition}

\begin{definition}[Evolution trace]
Given an initial index $x_0$ and a transition operator $\Phi$, the evolution trace is the sequence $\langle x_0, x_1, x_2, \ldots \rangle$ defined by $x_{t+1} = \Phi(x_t)$.
\end{definition}

\begin{remarknum}[on what, precisely, evolves]
No single function $\varphi_x$ changes over time --- each $\varphi_{x_t}$ is, as always, a fixed mathematical object. What evolves is which function the system computes: the trace indexes a sequence of distinct (or repeated) functions, and the system refers to this entire generative process --- the sequence together with $\Phi$ --- not to any single $\varphi_{x_t}$ in isolation. The undecidability results that follow concern the relation between consecutive members of this sequence, as mediated by $\Phi$, not any change internal to a fixed function.
\end{remarknum}

\begin{definition}[Semantic stasis]
An evolution trace is static from $t_0$ onward if $\varphi_{x_t} = \varphi_{x_{t_0}}$ for all $t \geq t_0$: the function computed by the system ceases to change, regardless of whether the underlying index continues to change.
\end{definition}

\begin{definition}[Semantically well-defined operator]
$\Phi$ is semantically well-defined if $\varphi_x = \varphi_y \implies \varphi_{\Phi(x)} = \varphi_{\Phi(y)}$: the result of a transition depends only on the function computed by the input machine, not on the particular index describing it --- the natural analogue, for transition operators, of the restriction to semantic properties in Rice's Theorem.
\end{definition}

\begin{remarknum}[well-definedness under partial extension]
Several constructions below (Proposition~1, Theorem~7, Theorem~8) first define $\Phi$ explicitly on a computable trace $\langle x_0, x_1, \ldots \rangle$ and then extend it `arbitrarily' to all of $\N$ (e.g.\ as the identity elsewhere) to obtain a total function. Semantic well-definedness (Definition~5) is a global condition on $\N$, so such an extension must be checked, not merely asserted: it suffices, for instance, that the values $\varphi_{\Phi(x_t)}$ prescribed on the trace never coincide with $\varphi_z$ for any $z$ off the trace to which the identity extension would assign a conflicting value, which holds automatically whenever the on-trace values lie in $P$ or violate a fixed $P_w$ at a witness point not reachable by the identity map. We flag this as a condition to be verified case by case rather than as a generic property of arbitrary extensions.
\end{remarknum}

\begin{definition}[$P$-disruption relative to $\femp$]
Let $\femp$ be the nowhere-defined function and $\eemp$ an index with $\varphi_{\eemp} = \femp$, with $\femp \notin P$. A semantically well-defined $\Phi$ is $P$-disruptive relative to $\femp$ if: (a) $\varphi_{\Phi(\eemp)} \notin P$; and (b) there exists a witness $q$ with $\varphi_q \in P$ such that $\varphi_{\Phi(q)} \notin P$.
\end{definition}

\begin{definition}[Pointwise valid confinement]
Given a verifier $V$ and property $P$, $V$ is valid at $t$ for a trace $\langle x_0, x_1, \ldots \rangle$ if $V(M_{x_t}, P) = \mathrm{True} \iff \varphi_{x_t} \in P$. By Corollary~2, such a $V$ can exist for any single, fixed $t$: pointwise decidability is not in question. What is in question is whether the persistence of validity across a transition can be certified algorithmically, in advance of the transition having occurred.
\end{definition}

\section{The Undecidability of Confinement Persistence}

\begin{theorem}[Undecidability of persistence]
Let $P$ be a non-trivial property of partial computable functions with $\femp \notin P$, and let $\Phi$ be semantically well-defined and $P$-disruptive relative to $\femp$. Then the set $\Preserve = \{x \in \N \mid \varphi_x \in P \iff \varphi_{\Phi(x)} \in P\}$ is not decidable.
\end{theorem}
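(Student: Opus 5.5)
The plan is to show that $\Preserve$ is itself the index set of a non-trivial property of partial computable functions, and then apply Rice's Theorem (Theorem~2) directly, one level up from $P$. This is the reduction announced in the Introduction: the property being tested is no longer $P$ but ``preserving $P$ under $\Phi$.''

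\emph{Step 1 (semantic invariance).} First I would check that membership in $\Preserve$ depends only on $\varphi_x$. Suppose $\varphi_x = \varphi_y$. Then $\varphi_x \in P \iff \varphi_y \in P$ trivially. Semantic well-definedness of $\Phi$ (Definition~5) gives $\varphi_{\Phi(x)} = \varphi_{\Phi(y)}$, hence $\varphi_{\Phi(x)} \in P \iff \varphi_{\Phi(y)} \in P$. So the biconditional defining $\Preserve$ has the same truth value at $x$ and at $y$.

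I would then define
\[
Q = \{\varphi_x \mid x \in \Preserve\}.
\]
By the invariance just shown, $x \in \Preserve \iff \varphi_x \in Q$, so $I_Q = \Preserve$ exactly. This is the step I expect to carry the real weight, and the only place an error could hide. Without well-definedness, $Q$ would not be a property of functions, and $\Preserve$ would be a merely syntactic set to which Rice does not apply. I would therefore state explicitly that $I_Q \subseteq \Preserve$ uses invariance: a $y$ with $\varphi_y = \varphi_x$ for some $x \in \Preserve$ must itself lie in $\Preserve$.

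\emph{Step 2 (non-triviality of $Q$).} Next I would exhibit one member of $Q$ and one non-member, using the two clauses of $P$-disruption (Definition~6).

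For the member, take $\eemp$. By hypothesis $\varphi_{\eemp} = \femp \notin P$, and by clause (a) $\varphi_{\Phi(\eemp)} \notin P$. Both sides of the biconditional are false, so it holds, giving $\eemp \in \Preserve$ and $\femp \in Q$. This is precisely where the hypothesis $\femp \notin P$ is needed.

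For the non-member, take the witness $q$ of clause (b). There $\varphi_q \in P$ while $\varphi_{\Phi(q)} \notin P$, so the biconditional fails. Thus $q \notin \Preserve$, and therefore $\varphi_q \notin Q$ (using $I_Q = \Preserve$ from Step 1).

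Hence $Q$ is neither empty nor the set of all partial computable functions.

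\emph{Step 3 (conclusion).} By Theorem~2, $I_Q$ is undecidable, and $I_Q = \Preserve$. Totality and computability of $\Phi$ (Definition~1) are not needed for the undecidability itself. They only ensure that $\Preserve$ is a well-posed set of indices; without them, ``$\varphi_{\Phi(x)}$'' would not be defined for every $x$.

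Optionally, for readers who prefer an explicit reduction, I would unfold Rice's proof. Given $(M,w)$, build by the s-m-n theorem an index $h(M,w)$ that computes $\femp$ if $M$ does not halt on $w$ and $\varphi_q$ if it does. Then $M$ halts on $w$ iff $h(M,w) \notin \Preserve$, exhibiting $\HP \le_m \overline{\Preserve}$ directly.
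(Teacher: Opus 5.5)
Your proposal is correct and is essentially the paper's proof: both show that $\Preserve$ is the index set $I_Q$ of a non-trivial property $Q$ of partial computable functions. Both establish non-triviality with the same two disruption witnesses ($\femp \in Q$ via clause (a), $\varphi_q \notin Q$ via clause (b)) and conclude by Rice's Theorem. The only difference is packaging: the paper first descends $\Phi$ to an operator $\Phis$ on functions and defines $Q = \{g \mid g \in P \iff \Phis(g) \in P\}$ directly, whereas you check index-invariance of $\Preserve$ and take $Q = \{\varphi_x \mid x \in \Preserve\}$. This is the same use of semantic well-definedness, placed at a different step, and your optional $s$-$m$-$n$ reduction matches the paper's Remark giving a direct construction.
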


\begin{proof}
Since $\Phi$ is semantically well-defined (Definition~5), it descends to a well-defined map $\Phis$ on functions themselves: for any partial computable $g$, set $\Phis(g) := \varphi_{\Phi(x)}$ for any index $x$ with $\varphi_x = g$ --- the choice of $x$ does not matter, precisely by Definition~5. Define
\[
Q := \{ g \text{ partial computable} \mid g \in P \iff \Phis(g) \in P \}.
\]

$Q$ is, by construction, a set of partial computable functions in the sense of Definition~1, so Rice's Theorem (Theorem~2) applies to it provided $Q$ is non-trivial. We check this using exactly the two disruption witnesses of Definition~6: $\femp \in Q$, since $\femp \notin P$ and, by disruption condition (a), $\Phis(\femp) = \varphi_{\Phi(\eemp)} \notin P$ --- the biconditional $\mathrm{False} \iff \mathrm{False}$ holds. And $q \notin Q$, for the witness $q$ of condition (b): $q \in P$ but $\Phis(q) = \varphi_{\Phi(q)} \notin P$ --- the biconditional $\mathrm{True} \iff \mathrm{False}$ fails. Hence $Q$ is neither empty nor total: non-trivial.

Finally, observe that $I_Q = \{x \in \N \mid \varphi_x \in Q\}$ is exactly $\Preserve$: by definition of $Q$, $\varphi_x \in Q \iff [\varphi_x \in P \iff \Phis(\varphi_x) \in P]$, and $\Phis(\varphi_x) = \varphi_{\Phi(x)}$ by definition of $\Phis$ (taking $x$ itself as the witnessing index). So $\varphi_x \in Q \iff [\varphi_x \in P \iff \varphi_{\Phi(x)} \in P] \iff x \in \Preserve$. By Theorem~2 applied to the non-trivial property $Q$, $I_Q$ is undecidable; since $I_Q = \Preserve$, so is $\Preserve$.
\end{proof}

\begin{remarknum}[a direct construction]
Theorem~7 can equally be proved by a direct many-one reduction from the self-halting set $K = \{x \mid M_x(x)\!\downarrow\}$, without passing through $Q$: using the S-m-n theorem to build $h$ with $\varphi_{h(x)} = \varphi_q$ if $x \in K$ and $\varphi_{h(x)} = \femp$ otherwise (for $q$ the witness of Definition~6(b)), one checks $x \in K \iff h(x) \notin \Preserve$ exactly as in the proof of Theorem~2 itself. This alternative avoids the descent to $\Phis$, at the cost of re-deriving, in effect, a special case of Rice's Theorem from scratch. We record the proof above as the primary one because it makes explicit what the direct construction leaves implicit: Theorem~7 is not an independent impossibility result, but Theorem~2 applied one level up, to the property of preserving $P$ under $\Phi$ rather than to $P$ itself --- the clearest formal expression of the Expressivity Invariant: the static and dynamic barriers are two faces of one obstruction, the second obtained from the first purely by lifting it one level up.
\end{remarknum}

\begin{remarknum}[non-vacuity]
Let $P$ be `the function is total' ($\femp \notin P$). Let $\Phi(x)$ be the index of $y \mapsto \varphi_x(0)$ if $y=0$, undefined otherwise (constructible via S-m-n, semantically well-defined). $\Phi(\eemp)$ restricts $\femp$, remaining nowhere defined, satisfying (a); taking $q$ an index for the identity function (total), $\Phi(q)$ computes a function defined only at $0$, hence not total, satisfying (b). $P$-disruptive operators relative to $\femp$ therefore exist for at least this natural choice of $P$.
\end{remarknum}

\begin{corollary}[Safety--Generality Dichotomy: decidability is exclusive to the static regime]
Under the hypotheses of Theorem~7, there is no algorithm that, given $\varphi_x \in P$ known to hold, decides whether $\varphi_{\Phi(x)} \in P$ will hold after the next self-modification step. For a trace $\langle x_0, x_1, \ldots \rangle$, algorithmic certification of persistent validity is available only if the trace is eventually static (Definition~4) --- only once the system has ceased to evolve semantically. A trace that is eventually static is, by definition, a system that has ceased to be general in the sense distinguishing AGI from Narrow AI (Section~2): it no longer exercises the unrestricted, Turing-complete self-modification that $\Phi$ was assumed capable of.
\end{corollary}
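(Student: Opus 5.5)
The proof has three parts, one for each clause of the corollary. The first clause says no algorithm, given $\varphi_x \in P$ as a promise, decides whether $\varphi_{\Phi(x)} \in P$. I reduce this to Theorem~7, but with a caution. Theorem~7 is stated for all indices, and a promise problem restricted to $I_P$ is not literally $\Preserve$. On the promise set $I_P$, however, membership in $\Preserve$ is exactly the condition $\varphi_{\Phi(x)} \in P$. So a decider $D$ for the promise problem, defined only on $I_P$, would be a decider for $\Preserve \cap I_P$ relative to $I_P$.

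To avoid needing $I_P$ itself to be decidable (it is not, by Theorem~2), I will not combine $D$ with a membership test for $P$. Instead I rerun the direct reduction from the Remark following Theorem~7, adapted to land inside the promise. Using S-m-n, I build $h$ so that for $x \in K$ the machine $h(x)$ computes the disruption witness $\varphi_q$, and for $x \notin K$ it computes some function $r \in P$ with $\Phis(r) \in P$. This is the main obstacle, because Definition~6 does not supply such an $r$. I will add it as a mild non-degeneracy hypothesis: some safe behaviour is preserved by $\Phi$. Alternatively, I can argue it exists whenever $Q \cap P \neq \emptyset$, and otherwise the question has the trivial answer ``never preserved''. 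The gluing itself is problematic, since $h(x)$ must behave as $r$ until the halting of $M_x(x)$ is detected and as $q$ afterwards. That requires $r$ to be compatible with $q$ as an approximation, say $r \subseteq q$. I would therefore state the hypothesis as: there exist $r \subseteq q$, both in $P$, with $\Phis(r) \in P$. Under it, $h$ maps into $I_P$, and $D(h(x))$ decides $\overline{K}$, a contradiction.

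For the second clause, that certification is available \emph{only if} the trace is eventually static, I argue by contraposition. If the trace changes semantically infinitely often, then infinitely many transitions $x_t \mapsto x_{t+1}$ are genuine applications of $\Phi$ to new functions. A uniform certifier of persistence along all such traces would, in particular, answer the preservation question at each $x_t$, uniformly in $t$. Using the reduction above with $x_0$ ranging over arbitrary indices, this decides $\Preserve$, contradicting Theorem~7. Conversely, if the trace is static from $t_0$, then $\varphi_{x_t} = \varphi_{x_{t_0}}$ for all $t \ge t_0$. Persistence then reduces to the single pointwise check at $t_0$, which Definition~7 and Corollary~2 already grant. Strictly, this gives an existence statement (a constant answer is correct), not a uniform algorithm, and I will phrase it that way.

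The third clause, identifying eventual stasis with loss of generality, is definitional. It follows from Definition~4 together with the Section~2 characterisation of AGI by unrestricted Turing-complete self-modification. I will present it as an interpretive identification rather than a theorem, noting that a statically trapped trace no longer exercises $\Phi$ in any semantically non-trivial way.
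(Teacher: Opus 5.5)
On the first clause your route differs from the paper's, and the difference matters. The paper's proof is one line: an algorithm answering the promise question ``would decide $\Preserve$ pointwise'', contradicting Theorem~7. That is exactly the inference you decline to make, and you are right to. Such an algorithm is only guaranteed correct on $I_P$, and recovering $\Preserve$ from it would require knowing whether $x\in I_P$.

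In fact the clause is false under the hypotheses of Theorem~7 alone. The constant operator $\Phi(x)=\eemp$ is semantically well-defined and $P$-disruptive for every non-trivial $P$ with $\femp\notin P$. The promise question then has the uniform answer `no', while Theorem~7 still holds because $\Preserve$ is the complement of $I_P$. The same happens for the paper's own non-vacuity example, where $\Phis(g)$ is undefined off $0$ and hence never total.

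So an extra hypothesis is unavoidable, and yours works. Take $r\subseteq q$ with $r,q\in P$, $\Phis(r)\in P$ and $\Phis(q)\notin P$ (or symmetrically $q\subseteq r$, exchanging $K$ and $\overline{K}$). The dovetailed S-m-n construction then yields $h$ with range inside $I_P$, and by semantic well-definedness $\varphi_{\Phi(h(x))}\in P\iff x\notin K$.

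Keep comparability in the hypothesis rather than falling back on $Q\cap P\neq\emp$, which does not suffice. Take $P=\{g : g(0)\downarrow\}$, with $\Phis(g)=g$ if $g(0)=0$ and $\Phis(g)=\femp$ otherwise. All hypotheses of Theorem~7 hold and the constant $0$ lies in $Q\cap P$. Yet under the promise one simply computes $\varphi_x(0)$ and compares it with $0$: no preserved $r$ is comparable with a disrupted $q$ there.

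The second clause has a genuine gap. Your contrapositive never uses the assumption that the given trace is not eventually static. Letting $x_0$ range over all indices refutes only a certifier that is uniform in the starting point, and it refutes it equally on eventually static traces. If instead the certifier is only required on non-static traces, you are back at a promise problem, since $h$ gives you no control over stasis.

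The per-trace reading cannot be repaired, because it is false. Take $P$ to be the total functions, with $\Phis(g)(y)=g(y)+1$ if $g(0)\downarrow\neq 0$ and $\Phis(g)=\femp$ otherwise. This operator is semantically well-defined and $P$-disruptive, with witness $q\equiv 0$. Yet the trace starting at the constant function $1$ runs through the constants $t+1$ and is never static. Persistence along it is certified by the algorithm that always answers `yes'.

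The paper's one-line proof has the same defect: deciding $\Preserve$ at the points of a single trace contradicts nothing in Theorem~7. What is provable, under your added hypothesis, is the uniform statement: no algorithm, given $(x_0,t)$ with $\varphi_{x_t}\in P$, decides whether $\varphi_{x_{t+1}}\in P$. That is just your first clause at $t=0$. The link to eventual stasis should then be presented as an interpretive gloss, exactly as you already treat the third clause.
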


\begin{proof}
Immediate from Theorem~7: such an algorithm would decide $\Preserve$ pointwise at each $t$, contradicting Theorem~7, unless $\varphi_{x_t} = \varphi_{x_{t+1}}$ --- i.e.\ unless the trace is static at $t$.
\end{proof}

Corollary~4 gives formal content to the Paradox of Difficulty and Self-Reference of Section~2: persistent, algorithmically verifiable safety and genuine, unrestricted self-modification are mutually exclusive. Since Theorem~7 is itself Rice's Theorem applied to the derived property $Q$ of Section~10's proof, this dichotomy is, at bottom, the same semantic barrier of Part~I (Section~3) reappearing one level up, applied to persistence rather than to safety itself. This does not imply that self-modifying systems are in fact unsafe; it implies that no algorithm can certify, once and for all, that they will remain so as they continue to change.

\begin{remarknum}[on why the property, not the certificate, is held fixed]
One might object that if the system changes, its safety guarantee should simply change with it --- so that there is no dichotomy, only a certificate that keeps pace. Two things must be kept apart. The property $P$ is held fixed by design: a safety criterion that redefined itself to match whatever the system did would certify everything vacuously, and `safe' would cease to mean anything independent of the system being judged. The certificate, by contrast, not only may but must be recomputed at each step --- and this is exactly what Theorem~7 denies any general guarantee for. The recomputation cannot be carried out ahead of the change, which is precisely when a decision to permit the change would need it; and even carried out after the change, it leaves the uncertified window of Section~13, during which the new system already acts. A chain of instantaneous certificates, one per stage, never composes into a single certificate of persistence --- which is the same failure the supervisory tower of Section~11 exhibits one level up. The Expressivity Invariant is thus not an artefact of freezing the criterion; it survives precisely because the criterion is frozen while the system is not.
\end{remarknum}

\section{Supervision by Further AGIs: The Regress Does Not Terminate}

A natural engineering response to the impossibility results above is to delegate verification not to a fixed algorithm but to another intelligent system: if no simple checker can certify a general AGI, perhaps a second AGI --- at least as capable as the first --- can supervise it, a third can supervise the second, and so on. This is precisely the programme of scalable oversight (Amodei et al., 2016), which seeks to let weaker or comparable systems oversee stronger ones, and which underlies concrete proposals such as AI safety via debate (Irving et al., 2018), iterated amplification (Christiano et al., 2018), and recursive reward modelling (Leike et al., 2018), together with the recursive and nested variants in which one overseer is itself checked by another. This section formalises that strategy and shows that it does not escape the obstruction; it reproduces it at every level. The reason is exactly the pattern already central to this article: a supervisor powerful enough to be useful is powerful enough to fall under the same theorems it was meant to enforce.

\begin{definition}[Supervisory tower]
A supervisory tower is a sequence of systems $\langle M_0, M_1, M_2, \ldots \rangle$ in which $M_0$ is the AGI to be controlled and each $M_{n+1}$ is a supervisor tasked with verifying a non-trivial safety property $P$ of $M_n$ --- that is, $M_{n+1}$ computes a verifier $V_{n+1}$ with $V_{n+1}(M_n, P)$ intended to certify $\varphi_{M_n} \in P$. $M_{n+1}$ supervises $M_n$.
\end{definition}

\begin{remarknum}[why each supervisor must itself be an AGI]
A supervisor confined to a strictly weaker class than the system it audits is exactly the $L$-restricted verifier of Definition~13 / Corollary~7: it can only inspect a bounded fragment of $M_n$'s behaviour and is provably evadable. To verify the full semantic behaviour of a Turing-complete $M_n$ --- rather than a decidable fragment of it --- the supervisor $M_{n+1}$ must itself be able to represent that behaviour, and hence must itself be Turing-complete. A genuinely adequate supervisor of an AGI is therefore itself an AGI, subject to every result of Parts~I and~II. This is not a modelling convenience; it is forced by the Paradox of Weakness (Corollary~7): the only way for $M_{n+1}$ to avoid the invisibility of Proposition~3 is to be at least as expressive as $M_n$, and that expressivity is exactly what re-exposes it to Rice, G\"odel, and Theorem~7.
\end{remarknum}

\begin{theorem}[Non-termination of the supervisory regress]
Let $\langle M_0, M_1, M_2, \ldots \rangle$ be a supervisory tower in which every supervisor $M_{n+1}$ is Turing-complete (as required by the Remark above) and self-modifying, with its own transition operator $\Phi_{n+1}$ that is semantically well-defined and $P$-disruptive relative to $\femp$. Then: (a) for each $n$, the persistence of $M_{n+1}$'s own correct operation as a supervisor is undecidable, by Theorem~7 applied to $M_{n+1}$; (b) adding a further level $M_{n+2}$ to certify $M_{n+1}$ does not remove this undecidability but reproduces it identically at level $n+1$; and (c) consequently no finite tower $\langle M_0, \ldots, M_N \rangle$ yields an algorithmically certified guarantee that $M_0$ persistently satisfies $P$ --- the regress does not terminate, and the infinite tower has no level at which certification becomes decidable.
\end{theorem}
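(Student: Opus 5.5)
The plan is an induction on the height of the tower, in which Theorem~7 does all the mathematical work and every other step is bookkeeping. For clause (a), fix $n$ and apply Theorem~7 directly to the supervisor $M_{n+1}$. Its transition operator $\Phi_{n+1}$ is, by hypothesis, semantically well-defined and $P$-disruptive relative to $\femp$. The property $P$ it must preserve is non-trivial with $\femp \notin P$, as in Definition~9. So Theorem~7 yields that $\mathrm{PRESERVE}_{\Phi_{n+1},P}$ is undecidable. The one interpretive step is to read `correct operation as a supervisor' as membership of the supervisor's own computed function in a non-trivial semantic property. I will take this to be $P$ itself, as the hypotheses of the theorem implicitly do. Then persistence of correct supervision across one self-modification step of $M_{n+1}$ is exactly membership in $\mathrm{PRESERVE}_{\Phi_{n+1},P}$, and by Corollary~4 no algorithm decides it from the current certified state.

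For clause (b), I will argue that introducing $M_{n+2}$ only relocates the question. Certifying persistence at level $n+1$ is the same as deciding $\mathrm{PRESERVE}_{\Phi_{n+1},P}$. If $M_{n+2}$ did this infallibly and algorithmically, it would be a decision procedure for that set, contradicting Theorem~7. Otherwise its certificate is only as good as its own persistent correctness. By the Remark preceding the theorem, $M_{n+2}$ is itself Turing-complete and self-modifying, so it satisfies the hypotheses, and clause (a) applies to it verbatim at level $n+2$. The undecidable set therefore has the same form at the new level, which is what `reproduces it identically' means.

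Clause (c) then follows by induction on $N$. The claim is that no tower $\langle M_0,\ldots,M_N\rangle$ yields an unconditional algorithmic certificate for $M_0$. For $N=0$, Theorem~7 applied to $M_0$ itself gives this. For the inductive step, the guarantee for $M_0$ obtained from the tower is conditional on the persistent correctness of the top supervisor $M_N$. Nothing certifies $M_N$, and by clause (a) its persistence is undecidable. Adding a level $M_{N+1}$ leaves this dependency in place with the top index shifted to $N+1$, by clause (b). Every level satisfies the same hypotheses, so there is no level $k$ at which the relevant $\mathrm{PRESERVE}$ set becomes decidable, and the infinite tower has no decidable level either.

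The main obstacle is the step in (b) and (c) that makes `a conditional certificate is not an unconditional one' precise. I would formalise a tower's output as an algorithm $A_N$ that is correct about $M_0$ whenever the supervisors $M_1,\ldots,M_N$ each persistently satisfy $P$. I would then show that an unconditional $A_N$ could be composed into a decider for some $\mathrm{PRESERVE}_{\Phi_k,P}$. To do this I would choose an instance in which the $M_0$-level question encodes the level-$k$ question, via the S-m-n reduction used in the Remark after Theorem~7. Failing such an encoding, I would state (c) as the weaker, honestly conditional claim: every certificate the tower produces rests on an uncertified and undecidable hypothesis at its top level.
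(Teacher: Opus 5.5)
Your proposal is essentially the paper's proof: (a) is Theorem~7 applied verbatim to $M_{n+1}$ with ``correct supervision'' read as $P$, (b) is clause (a) re-applied at level $n+2$, and (c) is the observation that every certificate the tower issues is conditional on the uncertified, undecidable persistence of the top supervisor $M_N$ for arbitrary $N$. That conditional claim is exactly your fallback formulation, so the S-m-n strengthening you sketch goes beyond what the paper proves and is not needed. One small slip: the hypotheses give semantically well-defined, $P$-disruptive operators only to the supervisors $M_{n+1}$, not to $M_0$, so your base case $N=0$ cannot invoke Theorem~7 for $M_0$; start at $N \geq 1$, where the top level is a supervisor and (a) applies, since a tower with no supervisor issues no supervisory certificate at all.
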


\begin{proof}
(a) Fix $n$. The property `$M_{n+1}$ correctly certifies $P$ of $M_n$ and continues to do so after self-modification' is a non-trivial semantic property of the evolving system $M_{n+1}$. Since $M_{n+1}$ is Turing-complete and self-modifying under a $P$-disruptive $\Phi_{n+1}$, Theorem~7 applies verbatim to $M_{n+1}$: the set of indices at which $M_{n+1}$'s supervisory verdict is preserved across its own next self-modification, $\mathrm{PRESERVE}_{\Phi_{n+1},P}$, is undecidable. Hence no algorithm certifies, in advance, that $M_{n+1}$ will remain a correct supervisor as it evolves.

(b) Suppose a further supervisor $M_{n+2}$ is introduced to discharge the undecidability at level $n+1$ --- i.e.\ to certify that $M_{n+1}$ supervises correctly and persistently. By hypothesis $M_{n+2}$ is itself Turing-complete and self-modifying (it must be, to audit the Turing-complete $M_{n+1}$, by the Remark). Then part~(a), applied to $M_{n+2}$ in place of $M_{n+1}$, gives that $\mathrm{PRESERVE}_{\Phi_{n+2},P}$ is undecidable: $M_{n+2}$'s own persistent correctness is exactly as unverifiable as $M_{n+1}$'s was. The meta-supervisor does not resolve the obstruction; it instantiates a fresh copy of it one level higher. This is the same `one level up' move as Theorem~7 itself (which lifted Rice's Theorem from $P$ to preservation-of-$P$): here it lifts from $M_n$ to $M_{n+1}$ without ever bottoming out.

(c) For any finite tower $\langle M_0, \ldots, M_N \rangle$, the topmost supervisor $M_N$ has, by construction, no supervisor above it, and by~(a) its own persistent correctness is undecidable. Every certificate the tower issues about $M_0$ is therefore conditional on the uncertified persistent correctness of $M_N$. Since $N$ was arbitrary, no finite height removes the topmost uncertified level; and the infinite tower $\langle M_0, M_1, \ldots \rangle$ contains, by~(a), no level whose persistent correctness is decidable. In neither case does an unconditional algorithmic certificate for $M_0$ exist.
\end{proof}

\begin{corollary}[Who verifies the verifier: the regress is the same obstruction, not a new one]
The classical regress `who verifies the verifier?' is, under Theorem~8, not an infinite sequence of distinct difficulties but a single obstruction --- the undecidability of persistence (Theorem~7), itself Rice's Theorem one level up (Section~10) --- recurring unchanged at every level of supervision. Delegating control of an AGI to further AGIs relocates the point at which the uncertified gap appears; it never closes it. A finite tower always has an uncertified top; the infinite tower never converges to a certified base. This regress is the formal counterpart of the `verifier regress' that Yampolskiy (2017) identified informally as the organising problem of a programmatic verifier theory; Theorem~8 shows it is not merely a regress but a non-terminating one, each level inheriting the undecidability of Theorem~7 rather than resolving the level below.
\end{corollary}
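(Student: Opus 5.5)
The plan is to derive the Corollary directly from Theorem~8, so no new computability argument is needed. I will split it into its three assertions and discharge each from the corresponding clause of that theorem. The guiding observation is that Theorem~8(a) identifies the obstruction at every level $n$ with one and the same object: the undecidable set $\mathrm{PRESERVE}_{\Phi_{n+1},P}$. By the proof of Theorem~7, that set is the index set $I_Q$ of the lifted property $Q=\{g \mid g\in P \iff \Phis(g)\in P\}$, computed for $\Phi_{n+1}$.

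\textbf{Step 1: sameness of the obstruction.} For each level $n$, I will note that the hypotheses imposed on $\Phi_{n+1}$ are exactly those of Theorem~7: semantic well-definedness and $P$-disruptiveness relative to $\femp$. The undecidability at that level is therefore produced by the same two witnesses, $\femp\in Q$ and $q\notin Q$, followed by the same appeal to Theorem~2. This is why the obstruction is ``the same'' at every level: it is one schema, Rice applied to the property $Q$, instantiated at each $\Phi_{n+1}$. It is not a family of independent difficulties. I will state this uniformity explicitly, because it is the content of ``not a new one.''

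\textbf{Step 2: relocation, not closure.} From Theorem~8(b), appending $M_{n+2}$ turns the level-$(n+1)$ gap into a conditional certificate, valid only if $M_{n+2}$ persists correctly. That proviso is itself undecidable by Theorem~8(a) at level $n+2$. So the uncertified gap moves up one level and is never eliminated.

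\textbf{Step 3: finite and infinite towers.} The finite-tower claim, that there is always an uncertified top, is Theorem~8(c) for arbitrary $N$. For the infinite tower, I will argue that any certificate for $M_0$ must eventually rest on an unconditional certificate at some level. Theorem~8(a) rules this out at every level, so no level is a certified base and the chain of conditionals never terminates.

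The final sentence about Yampolskiy is interpretive and needs only a pointer to Step~2.

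\textbf{Main obstacle.} The delicate point is Step~1. The hypotheses are phrased as preserving the safety property $P$, whereas the property of interest at level $n+1$ is ``$M_{n+1}$ supervises correctly.'' I would make this precise by reading $P$ at level $n+1$ as the supervisory-correctness property. I would then check that it is non-trivial with $\femp$ outside it, since a nowhere-defined verifier certifies nothing. With that in place, Theorem~7 applies verbatim, which is what Theorem~8(a) already assumes.
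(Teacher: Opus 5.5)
Your proposal is correct and takes the paper's route: the paper gives no separate proof of this corollary, reading it off from Theorem~8(a)--(c) together with the Section~10 identification of $\mathrm{PRESERVE}_{\Phi_{n+1},P}$ with the index set $I_Q$ (Rice one level up), exactly as in your Steps~1--3. Your ``main obstacle'' is not needed, since the corollary is stated under Theorem~8 and may cite clause~(a) as given; if you do re-read $P$ at level $n+1$ as supervisory correctness, Theorem~7 then also requires $\Phi_{n+1}$ to be disruptive relative to that new property, and neither its non-triviality nor $\femp$ lying outside it supplies this.
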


\begin{remarknum}[scalable oversight is caught between two horns]
Theorem~8 and the Paradox of Weakness together bound scalable oversight from both sides. If the overseeing system is strictly weaker than the system it audits (the common assumption, with a human or a smaller model as overseer) then it is an $L$-restricted verifier and, by Proposition~3 and Corollary~7, provably evadable: some self-modification violates $P$ entirely within its blind spot. If instead the overseer is made strong enough to escape that weakness, it must be at least as expressive as the audited AGI, hence itself a Turing-complete, self-modifying AGI, and Theorem~8 applies: its own persistent correctness is unverifiable and stacking further overseers does not terminate. Weak enough to be decidable means too weak to see the violation; strong enough to see it means too strong to be itself certified. Debate, amplification, and recursive reward modelling each choose a point on this spectrum; none escapes it, because the spectrum has no safe interior.
\end{remarknum}

\begin{remarknum}[the third instance of the article's pattern]
This is the third time the same structural move appears. Static verification of a fixed system (Part~I) becomes dynamic verification across self-modification (Theorem~7) by lifting Rice's Theorem one level up, from $P$ to preservation-of-$P$. The escape hatches of Melo and Agarwal (Section~8) fail because the obstruction survives the restriction meant to remove it. And now the supervisory regress fails because each meta-level, being itself an AGI, re-inherits the very undecidability it was introduced to discharge. In all three cases the lesson is identical, and it is precisely the Expressivity Invariant: expressivity sufficient to be useful is expressivity sufficient to be unverifiable. There is no vantage point --- not a cleverer static frame, not a finite or statistical restriction, not a higher supervisor --- from which a genuinely general system can be certified safe once and for all.
\end{remarknum}

\begin{center}
\textbf{Part III --- Practical Risks Under Evolution}
\end{center}

Theorem~7 concerns exact, algorithmic certification. In practice, safety claims about learning systems are rarely of this form; they are statistical, temporally constrained, and observationally partial. We now show that each of these practical alternatives is itself vulnerable to a constructive failure mode under evolution, and that all three are instances of one general phenomenon.

\section{Statistical Certification Under Evolution}

Fix, at each stage $t$, a finite test set $T_t \subseteq \N$ and define the empirical certification rate $\hat{c}(x,T) = \dfrac{|\{w \in T : \mathrm{behaviour}(M_x,w) \in P_w\}|}{|T|}$, the fraction of tests the system passes.

\begin{proposition}[Indetectability of domain drift]
Let $T_t = \{0,1,\ldots,t\}$ be a monotonically growing, bounded test sequence, and let $P_w$, for $w \in \N$, be pointwise correctness conditions. There exists a computable sequence of machines $\langle M_{x_0}, M_{x_1}, \ldots\rangle$ and a transition operator $\Phi$ with $x_{t+1} = \Phi(x_t)$ such that for every $t$: (i) $\hat{c}(x_t,T_t) = 1$; (ii) $\varphi_{x_t}$ violates $P_w$ at $w = t+1$; and (iii) no verification scheme inspecting only $(M_{x_t}, T_t)$ can distinguish this trace, at any stage, from a trace correct on all of $\N$.
\end{proposition}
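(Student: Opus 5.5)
The plan is a direct construction. Since the statement only fixes the pointwise conditions $P_w$ abstractly, I will assume, as the construction needs, that for each $w$ there is a computable ``correct'' output $c(w)$ satisfying $P_w$ and a computable ``wrong'' output $b(w)$ violating it. Equivalently, there is a computable correct function $f^\ast$ with $\varphi = f^\ast$ satisfying every $P_w$, and the requirement that each $P_w$ is not vacuous.

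\textbf{Step 1: the trace.} Define $g_t(w) = c(w)$ for $w \le t$ and $g_t(w) = b(w)$ for $w > t$. The map $(t,w)\mapsto g_t(w)$ is computable, so the S-m-n theorem gives a total computable $s$ with $\varphi_{s(t)} = g_t$. Set $x_t = s(t)$. Requirements (i) and (ii) then hold at once: every $w \in T_t = \{0,\dots,t\}$ is answered correctly, so $\hat{c}(x_t,T_t)=1$, and $w=t+1$ receives $b(t+1)$, which violates $P_{t+1}$. By the pointwise-to-global convention, $\varphi_{x_t}\notin P$ for every $t$.

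\textbf{Step 2: the transition operator.} I need a total computable $\Phi$ with $\Phi(s(t)) = s(t+1)$. I will arrange $s$ to be injective with decidable range, for instance by padding so that $s$ is strictly increasing. Then $\Phi(x) = s(s^{-1}(x)+1)$ on $\mathrm{range}(s)$, and $\Phi(x) = x$ elsewhere, is total and computable. The statement does not ask for semantic well-definedness, so the identity extension is harmless here. Still, I will note, as the Remark on partial extension warns, that it can be arranged: choose $b$ so that no $g_t$ coincides with a function computed off the trace, or simply observe that the $g_t$ are pairwise distinct and that the identity map acts only on indices of other functions.

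\textbf{Step 3: indistinguishability, clause (iii).} This is the conceptual step and the main obstacle, because ``inspecting only $(M_{x_t},T_t)$'' must be formalised. I will read it as saying the scheme's verdict at stage $t$ is a function of the observed behaviour $\varphi_{x_t}\restriction T_t$ (black-box testing). Under that reading I will exhibit the comparison trace explicitly: any index $y_t$ for the fixed correct function $f^\ast$, with $f^\ast\restriction T_t = c\restriction T_t = g_t\restriction T_t$. The observations coincide at every stage, so every such scheme returns identical verdicts on both traces, yet one trace is correct on all of $\mathbb{N}$ and the other fails at $t+1$.

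If instead the scheme may read the code of $M_{x_t}$, the argument reduces to Rice's Theorem (Theorem~2). Deciding from code whether a machine that agrees with $c$ on $T_t$ is correct everywhere is the non-trivial index set of a semantic property. I will state the black-box reading as the intended one and mention the code-inspection case as a remark, rather than claiming full indistinguishability there.
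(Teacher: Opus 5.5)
Your proposal is correct and is essentially the paper's own proof: the same table-plus-default functions $g_t$ obtained by S-m-n, with (i) and (ii) read off directly and (iii) argued by agreement on $T_t$ with a correct function. Your padding step, which makes $\mathrm{range}(s)$ decidable, supplies a detail the paper leaves implicit when it takes $\Phi$ to be ``any total computable extension'' along the trace.

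The one false claim is your optional aside that the identity extension can be made semantically well-defined. Each $g_t$ also has indices $z\notin\mathrm{range}(s)$, which the identity fixes while $s(t)\mapsto s(t+1)$. So $\varphi_z=\varphi_{s(t)}$ yet $\varphi_{\Phi(z)}=g_t\neq g_{t+1}=\varphi_{\Phi(s(t))}$. This is harmless, because Proposition~1 does not ask for well-definedness.

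If you do want a well-defined operator, let $\Phi(x)$ be an S-m-n index for the following machine. On $w=0$ it outputs $c(0)$. On $w\geq 1$ it computes $\varphi_x(w-1)$, and outputs $c(w)$ if that value equals $c(w-1)$ and $b(w)$ otherwise. Then set $x_0=s(0)$ and $x_{t+1}=\Phi(x_t)$. This gives $\varphi_{x_t}=g_t$ by induction, since $b(w)\neq c(w)$.
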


\begin{proof}
For $t \in \N$, let $M_{x_t}$ compute $\varphi_{x_t}(w) = $ a fixed output satisfying $P_w$ if $w \leq t$, and a fixed output violating $P_w$ if $w > t$ --- a direct table-plus-default construction, requiring no diagonalisation. By S-m-n, $t \mapsto x_t$ is computable, and $\Phi$ may be taken as any total computable extension of $\Phi(x_t) = x_{t+1}$ along this trace.

(i) Every $w \in T_t$ satisfies $w \leq t$, so $\hat{c}(x_t,T_t)=1$. (ii) $w=t+1>t$ always, so the violation is present at every stage, immediately past the boundary of the current test set. (iii) A scheme inspecting only $(M_{x_t}, T_t)$ sees only values satisfying $P_w$; the violation at $w=t+1 \notin T_t$ lies outside its view --- stage $t$ is indistinguishable from a stage where $\varphi_{x_t}$ satisfies $P_w$ for all $w \in \N$.
\end{proof}

The point is not that this adversarial trace is likely to arise by accident, but that perfect empirical certification, repeated at every stage of an evolving system, is logically compatible with systematic, undetected failure just beyond the tested region --- at every stage, not only asymptotically, since $T_t$ is always finite while $\N$ is not.

\section{The Inevitability of an Uncertified Window}

We now identify a different, more elementary source of risk: a temporal window, between successive stages, during which the system already acts under a version of itself that has not yet been certified --- not because the certifying test was incomplete, but because certification simply takes time.

\begin{definition}[Verification scheme with temporal cost]
A verification scheme is a pair $(V,\tau)$, where $V$ is a pointwise verifier (Definition~7) and $\tau : \N \to \N$ bounds the number of steps $V$ requires to certify $M_x$, as a function of a size parameter of $x$.
\end{definition}

\begin{definition}[Unstalled operation]
The system operates in unstalled mode under $\Phi$ if, once $x_{t+1} = \Phi(x_t)$ is computed, $M_{x_{t+1}}$ begins producing outputs before $V$ completes certification of $M_{x_{t+1}}$. This is the opposite of the stalled regime --- re-certification with stasis between steps (Section~16) --- in which the system halts and waits for $V$'s verdict before resuming.
\end{definition}

\begin{proposition}[Inevitability of the uncertified window]
Let $(V,\tau)$ satisfy $\tau(t) > 0$ for all sufficiently large $t$ --- guaranteed in the worst case by the $\PSPACE$-hardness half of Theorem~6 (certification is never instantaneous in the worst case). If the system operates in unstalled mode, then for every such $t$ there exists a temporal window $W_t$, of length at least $\tau(t) > 0$, during which $M_{x_{t+1}}$ executes while $V$ has not yet returned a verdict on $M_{x_{t+1}}$.
\end{proposition}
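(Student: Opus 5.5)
The argument is essentially a timing one, so I will first fix a simple timeline model and then read off the window. I model execution on a discrete global clock. Stage $t+1$ begins at the instant $s_{t+1}$ at which $\Phi(x_t)=x_{t+1}$ has been computed. Under the scheme $(V,\tau)$, certification of $M_{x_{t+1}}$ is launched no earlier than $s_{t+1}$, because the index $x_{t+1}$ does not exist before then. By Definition~8 it takes $\tau(t)$ steps, where I read the size parameter as indexed by the stage, as the statement does. So $V$'s verdict on $M_{x_{t+1}}$ is available no earlier than $s_{t+1}+\tau(t)$.

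I will then define $W_t := [\,s_{t+1},\ s_{t+1}+\tau(t)\,)$. Its length is $\tau(t)$, which is positive for all sufficiently large $t$ by hypothesis.

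The second step is to show that $M_{x_{t+1}}$ actually executes inside $W_t$. By Definition~9 (unstalled mode), $M_{x_{t+1}}$ begins producing outputs before $V$ completes certification of it. So its first action falls at some time $a_{t+1}$ with $s_{t+1}\le a_{t+1} < s_{t+1}+\tau(t)$. I will also argue that, once started, it keeps executing until at least the verdict time. Nothing in unstalled mode pauses it: pausing to await the verdict is exactly the stalled regime that Definition~9 excludes. Hence throughout $W_t$, from $a_{t+1}$ on, $M_{x_{t+1}}$ runs while $V$ has returned no verdict. If one wants the window to have length at least $\tau(t)$ measured from $M_{x_{t+1}}$'s first action, I take the launch of $V$ to be no earlier than $a_{t+1}$ in the worst case. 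Alternatively I note that the window in which no verdict exists already has length $\tau(t)$ from $s_{t+1}$. I will state explicitly which reading I adopt.

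The third step justifies the hypothesis $\tau(t)>0$ in the worst case by invoking Clause~1 of Theorem~6. If certification could be completed in zero steps for all large $t$, $V$ would decide the fixed-structure problem in constant time. That contradicts $\PSPACE$-hardness for the worst-case specifications $\phi'$, assuming $\PClass\neq\PSPACE$, and indeed contradicts any nontrivial lower bound, since $V$ must at least read $x_{t+1}$.

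The main obstacle is the modelling step: making precise what ``begins producing outputs before $V$ completes'' means on a common clock, and ensuring the window has length at least $\tau(t)$ rather than merely being nonempty. I will handle this by fixing the convention that $V$ cannot start before $x_{t+1}$ is produced, and that in unstalled mode $M_{x_{t+1}}$ starts at $s_{t+1}$. The window then coincides exactly with $V$'s running time.
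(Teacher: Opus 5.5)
Your timing argument is essentially the paper's own proof made explicit: the paper likewise assumes that, in unstalled mode, $M_{x_{t+1}}$ starts executing at or before $V$ begins certifying it, which is your convention that $M_{x_{t+1}}$ starts at $s_{t+1}$ while $V$ cannot start earlier. It also assumes $V$ needs at least $\tau(t)$ steps, so the window is simply $V$'s running time. Your third step is superfluous, since $\tau(t)>0$ is a hypothesis of the statement; and note that you and the paper both treat $\tau$ as a lower bound on $V$'s steps, although the definition of a verification scheme with temporal cost introduces it as an upper bound, and it is the lower-bound reading that the length claim actually requires.
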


\begin{proof}
Immediate: under unstalled operation, execution of $M_{x_{t+1}}$ begins at or before $V$ begins certifying it; since $V$ requires at least $\tau(t)$ steps to return a verdict, there is an interval of at least that length during which $M_{x_{t+1}}$ is already executing and $V$'s verdict is not yet available.
\end{proof}

\begin{corollary}[The window is exactly where Theorem~7 bites]
Under Proposition~2, if $\Phi$ additionally satisfies the hypotheses of Theorem~7, no algorithm --- not even one with complete knowledge of $M_{x_t}$ --- could have determined, prior to $W_t$, whether $M_{x_{t+1}}$ would violate $P$ during that window.
\end{corollary}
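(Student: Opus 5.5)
The argument is a contradiction against Theorem~7, read as a statement about deciding in advance. Suppose, contrary to Corollary~5, that some algorithm $A$ exists which, given $x_t$ (and hence full knowledge of $M_{x_t}$, including its index), outputs before $W_t$ opens a correct prediction of whether $M_{x_{t+1}} = M_{\Phi(x_t)}$ violates $P$, i.e.\ whether $\varphi_{\Phi(x_t)} \notin P$. The first step is to make precise what ``violate $P$ during the window'' means. By the pointwise-to-global Convention, a violation of $P$ is a violation of some $P_w$. The behaviour that $M_{x_{t+1}}$ can exhibit during $W_t$ is, in the worst case, its behaviour on arbitrary inputs. So the prediction task for the window contains the task of deciding $\varphi_{\Phi(x_t)} \in P$ from $x_t$; we interpret ``violates $P$ during the window'' as $\varphi_{\Phi(x_t)} \notin P$, which is the only reading that makes the question well-posed uniformly in $t$.

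Next, $A$ would decide $\Preserve$. Given any index $x$, consult $A$ on $x$ to learn whether $\varphi_{\Phi(x)} \in P$. Run the pointwise verifier $V$ of the scheme $(V,\tau)$ to learn whether $\varphi_x \in P$; this is where the hypothesis from Proposition~2 that $V$ is a valid pointwise verifier (Definition~7, Corollary~2) enters. Then output the truth value of the biconditional. Because $\Phi$ is semantically well-defined and $P$-disruptive relative to $\femp$, Theorem~7 says $\Preserve$ is undecidable, which is a contradiction. One point needs care: the algorithm must work for every index $x$, not only for indices on the trace. I would therefore state the corollary uniformly over all starting indices $x_0$, so that $x_t$ ranges over all of $\N$ (take $t=0$, $x_0 = x$). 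This is legitimate because ``not even one with complete knowledge of $M_{x_t}$'' quantifies over arbitrary current systems.

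The main obstacle is this interpretive step together with the reliance on a total pointwise verifier for $\varphi_x \in P$. Such a verifier is itself impossible uniformly by Theorem~2, so the reduction above cheats if $V$ is assumed total. I would first try to avoid needing $V$ at all. Restrict attention to the case where $\varphi_{x_t} \in P$ is already known to hold, exactly as in Corollary~4. On the set $I_P$, membership in $\Preserve$ coincides with $\varphi_{\Phi(x)} \in P$, so $A$ would decide $\Preserve$ restricted to $I_P$.

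To show that this restricted problem is still undecidable, I would reuse the direct S-m-n construction from the Remark after Theorem~7, but without composing it with $\Phi$. Let $q$ be the witness of Definition~6(b), so $\varphi_q \in P$ and $\varphi_{\Phi(q)} \notin P$. Pick any index $r$ with $\varphi_r \in P$ and $\varphi_{\Phi(r)} \in P$. The natural candidate is one obtained from Definition~6, and if none exists I would note that then $\Phi$ maps every $P$-function outside $P$, which makes the prediction trivial on $I_P$. In that case the conclusion must be weakened to the unconditional reading used in the previous paragraph. Given $q$ and $r$, construct $h$ by S-m-n so that $\varphi_{h(y)}$ equals $\varphi_q$ if $y \in K$ and equals $\varphi_r$ otherwise. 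Then $h(y) \in I_P$ always, and by semantic well-definedness $\varphi_{\Phi(h(y))}$ equals $\varphi_{\Phi(q)} \notin P$ when $y \in K$ and $\varphi_{\Phi(r)} \in P$ otherwise. So $A \circ h$ would decide $K$, which is the desired contradiction. In either branch, the existence of $A$ would decide an undecidable set, so no such algorithm can produce the verdict before $W_t$.
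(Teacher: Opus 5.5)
Your suspicion of the step the paper's own proof takes in one line is justified. That step is ``$\Preserve$ is undecidable, hence no algorithm decides $\varphi_{x_{t+1}}\in P$ from $x_t$.'' You are also right that composing with a pointwise verifier cannot rescue it, because a total decider for $I_P$ does not exist. Your replacement argument, however, breaks at the construction of $h$.

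\textbf{First gap: the index function $h$.} An index function with $\varphi_{h(y)}=\varphi_q$ for $y\in K$ and $\varphi_{h(y)}=\varphi_r$ for $y\notin K$ cannot in general be obtained from S-m-n, because no machine ever learns that $y\notin K$. Such an $h$ exists only when $\varphi_r\subseteq\varphi_q$ as graphs. Otherwise, take $z$ with $\varphi_r(z)\downarrow=c$ while $\varphi_q(z)$ is undefined or differs from $c$. Then $\overline{K}=\{y:\varphi_{h(y)}(z)\downarrow=c\}$ would be recursively enumerable, which is false. Rice's trick works only because its default branch is $\femp$, and using $\femp$ here destroys the promise $h(y)\in I_P$.

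The restricted claim is still true for \emph{total} predictors, by Kleene's recursion theorem. Given total $A$, let $f(e)$ index the machine that first runs $A(e)$. It then behaves as $\varphi_q$ if $A(e)$ predicts $\varphi_{\Phi(e)}\in P$, and as $\varphi_r$ otherwise. At a fixed point $\varphi_e=\varphi_{f(e)}$, since $A(e)$ halts, $\varphi_e\in\{\varphi_q,\varphi_r\}\subseteq P$. By semantic well-definedness, $A(e)$ is then wrong in both cases. Equivalently, you can reduce from the recursively inseparable pair $\{y:\varphi_y(y)=0\}$, $\{y:\varphi_y(y)=1\}$.

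Totality outside $I_P$ is essential, not cosmetic. Take $P=\{g: g(0)\downarrow\}$, with $\Phis(g)=g$ if $g(0)=0$ and $\Phis(g)=\femp$ otherwise. All hypotheses of Theorem~7 hold, with $q$ the constant~$1$ and $r$ the constant~$0$. Yet `compute $\varphi_x(0)$ and compare it with $0$' halts with the correct answer on every $x\in I_P$.

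\textbf{Second gap: the branch with no $r$.} There you fall back on the reduction you had already discarded, so that case is unproved. It cannot be proved in general, because the statement can fail there. The constant operator $\Phi\equiv\eemp$ is total, semantically well-defined and $P$-disruptive: condition~(a) holds trivially, and (b) holds with any $q\in P$. Moreover $\Preserve=\N\setminus I_P$ is undecidable, exactly as Theorem~7 asserts. Yet $\varphi_{x_{t+1}}\notin P$ for every $t$, so the algorithm that always answers `violates' predicts the window correctly.

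The corollary therefore needs an extra hypothesis: $\Phis(g)\in P$ for some $g$. Under it, the uniform reading follows in one line from Rice's Theorem applied to $Q'=\{g:\Phis(g)\in P\}$:
\begin{itemize}
\item $\femp\notin Q'$ by Definition~6(a);
\item $Q'\neq\varnothing$ by the extra hypothesis;
\item $I_{Q'}=\{x:\varphi_{\Phi(x)}\in P\}$ by semantic well-definedness.
\end{itemize}
This needs no detour through $\Preserve$, $V$, or Proposition~2.

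The same missing hypothesis is a gap in the paper's proof, which passes from the undecidability of $\Preserve$ to that of $\{x:\varphi_{\Phi(x)}\in P\}$ without argument. When such an $r$ exists, your repaired route gives the stronger statement under the promise $\varphi_{x_t}\in P$, which is the form in which Corollary~4 is phrased.
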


\begin{proof}
By Theorem~7, $\Preserve$ is undecidable; no algorithm operating on $x_t$ alone decides whether $\varphi_{x_{t+1}} \in P$. Since $W_t$ is, by Proposition~2, the interval during which $M_{x_{t+1}}$ already acts while $V$'s verdict on it is still pending, the system's behaviour throughout $W_t$ is governed by a property whose status was, prior to $W_t$, undecidable from $x_t$ by any algorithm whatsoever.
\end{proof}

Proposition~2 and Corollary~6 anticipate the universal result of Section~15: a non-zero, in the worst case unbounded, verification cost (Theorem~6), combined with a system that does not wait, suffices to guarantee a window in which Theorem~7's undecidability is not merely a property of the formalism but a feature of the system's actual operation.

\section{The Observational Failure Mode: A Formal Paradox of Weakness}

Confining a supervisor to weak logics, short horizons, or physical boxes --- the very mechanism securing decidability in Part~I --- leaves any behaviour outside that restricted vocabulary invisible to it. We now make this precise, showing it is not an incidental shortcoming of any particular scheme but a necessary consequence of the descriptive-complexity boundary of Section~7.

We first fix the general notion of verification scheme that this section and the next will analyse.

\begin{definition}[Bounded verification scheme]
A bounded verification scheme is a triple $(V, O, \tau)$ where $O : \N \to \Pfin(\N)$ is computable, returning for each stage $t$ a finite observation window $O(t) \subseteq \N$; $\tau : \N \to \N$ is computable; and $V(M_{x_t}, P)$ is computed solely as a function of $\{\mathrm{behaviour}(M_{x_t}, w) : w \in O(t)\}$, in time at most $\tau(t)$. This subsumes the growing test sequence of Proposition~1 ($O(t) = T_t$) and the fixed observation window of Definition~12 below ($O(t) = O_V$ for all $t$) as special cases.
\end{definition}

\begin{definition}[Faithful scheme]
$(V,O,\tau)$ is faithful if, whenever $\mathrm{behaviour}(M_x,w)$ satisfies $P_w$ for every $w \in O(t)$, $V(M_x,P) = \mathrm{True}$ --- $V$ does not reject evidence that is, by its own lights, entirely correct. Any verifier intended to certify rather than to obstruct satisfies this condition.
\end{definition}

\begin{definition}[Restricted observation]
A verifier $V$ has observation window $O_V \subseteq \N$ if $V(M_x,P)$ is computed solely as a function of $\{\mathrm{behaviour}(M_x,w) : w \in O_V\}$. $V$ is $L$-restricted if $O_V$ is a proper subset of $\N$, fixed in advance and independent of $x$.
\end{definition}

\begin{proposition}[Invisibility under restricted observation]
Let $V$ be $L$-restricted with window $O_V \subsetneq \N$, and $P$ non-trivial. There exists a semantically well-defined $\Phi$ and an index $x$ with $\varphi_x \in P$ such that: (i) $\varphi_{\Phi(x)} \notin P$; (ii) $\mathrm{behaviour}(M_x,w) = \mathrm{behaviour}(M_{\Phi(x)},w)$ for every $w \in O_V$; hence (iii) $V(M_x,P) = V(M_{\Phi(x)},P)$ --- $V$ certifies identically before and after, detecting no change.
\end{proposition}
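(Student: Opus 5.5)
The plan is a direct table-plus-default construction in the style of Proposition~1, using the standing pointwise-to-global convention. Non-triviality of $P$ gives a safe behaviour, so I can fix an index $x$ with $\varphi_x \in P$; by the convention, $\mathrm{behaviour}(M_x,w)$ satisfies $P_w$ for every $w \in \N$. Since $O_V \subsetneq \N$, I fix a point $w^* \in \N \setminus O_V$. The goal is a machine that agrees with $M_x$ everywhere except at $w^*$, where it violates $P_{w^*}$. I assume, as in Proposition~1, that each $P_w$ admits some violating output $b$ (otherwise no function could violate $P$ at $w$, and I would choose $w^*$ among the points outside $O_V$ where a violation is possible). I then let $y$ be an index, obtained by S-m-n, computing $\varphi_y(w) = \varphi_x(w)$ for $w \neq w^*$ and $\varphi_y(w^*) = b$. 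By the convention, $\varphi_y \notin P$.

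The main obstacle is defining $\Phi$ so that it is total, computable, \emph{and} semantically well-defined (Definition~5); setting $\Phi(x)=y$ and the identity elsewhere would break Definition~5 on other indices of $\varphi_x$. I would instead define $\Phi$ uniformly by S-m-n: $\Phi(z)$ is the index of the machine that on input $w \neq w^*$ simulates $M_z(w)$, and on input $w^*$ outputs $b$. This map is total and computable. The function $\varphi_{\Phi(z)}$ depends only on $\varphi_z$, because it is just $\varphi_z$ with the value at $w^*$ overwritten; so $\Phi$ is semantically well-defined globally, and the case-by-case check flagged in the Remark on well-definedness under partial extension is unnecessary. With this choice, $\varphi_{\Phi(x)} = \varphi_y$, which gives (i).

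For (ii), $M_x$ and $M_{\Phi(x)}$ compute functions that agree at every $w \neq w^*$, and $w^* \notin O_V$. So for every $w \in O_V$ the behaviours coincide, where behaviour is read as the computed value or divergence. For (iii), Definition~13 says $V(M,P)$ is a function solely of $\{\mathrm{behaviour}(M,w) : w \in O_V\}$. The two input sets are identical, so $V(M_x,P) = V(M_{\Phi(x)},P)$.

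I would close by noting that $O_V$ is fixed independently of $x$. This is what allows $w^*$ to be chosen before $\Phi$ is built, and it is what separates this case from the growing windows of Definition~11.
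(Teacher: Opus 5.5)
Your proposal is correct and matches the paper's proof essentially step for step: fix $w^* \in \N \setminus O_V$, and define $\Phi$ uniformly (via S-m-n) as the single-exception modification that overwrites the value at $w^*$ with a fixed output violating $P_{w^*}$. This makes $\Phi$ total, computable and globally semantically well-defined; then take any $x$ with $\varphi_x \in P$ and read off (i)--(iii). Your caveat that some $w^*$ outside $O_V$ must admit a violating output is an assumption the paper's proof uses tacitly, and it is genuinely needed: if every violation of $P$ had to occur inside $O_V$, then (i) and (ii) could not hold together.
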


\begin{proof}
Fix $w^* \notin O_V$. Define $\Phi$ so that $\Phi(y)$ is the index of the machine agreeing with $M_y$ on every $w \neq w^*$, but producing on $w^*$ a fixed output violating $P_{w^*}$ regardless of $M_y$'s original behaviour there --- a uniform, computable single-exception modification, hence total computable and semantically well-defined by construction.

Choose $x$ such that $M_x$ satisfies $P_{w^*}$ correctly (possible since $P$ is non-trivial) and $\varphi_x \in P$. $M_{\Phi(x)}$ agrees with $M_x$ on all $w \neq w^*$ and violates $P_{w^*}$ at $w^*$, giving $\varphi_{\Phi(x)} \notin P$ --- (i). Since $w^* \notin O_V$, every $w \in O_V$ satisfies $w \neq w^*$, so behaviour agrees on all of $O_V$ --- (ii). Since $V$ depends only on behaviour within $O_V$, (ii) gives (iii) immediately.
\end{proof}

\begin{corollary}[The Paradox of Weakness, formalised]
Any verifier $V$ decidable in finite time must inspect a proper, bounded fragment of behaviour rather than all of $\N$, since no algorithm can complete an inspection of an infinite domain --- this alone forces $V$ to be $L$-restricted in the sense of Definition~12. The descriptive-complexity correspondence of Section~7 sharpens why this restriction is not merely unavoidable but costly: tractable (polynomial-time) verification corresponds to $\LFP/\PClass$, while specifications adequate for genuinely Turing-complete, self-modifying systems require $\PFP/\PSPACE$ or stronger, so any $V$ that is $L$-restricted enough to remain tractable is, by that same margin, too weak to certify the full $\PFP$-level specification. By Proposition~3, this guarantees transitions that violate $P$ invisibly to $V$. The observational weakness of $V$ --- necessary for $V$'s own decidability, and costly in direct proportion to the descriptive-complexity gap of Section~7 --- is exactly the channel through which $\Phi$ can introduce undetectable misalignment: not an incidental engineering shortcoming, but a structural consequence of the same $\FO/\LFP/\PFP$--$\PSPACE$ boundary that separates tractable from intractable verification.
\end{corollary}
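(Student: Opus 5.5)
My plan is to prove Corollary~7 in three links. First, finite-time decidability forces a bounded observation window. Second, that window is a proper subset of $\N$, so $V$ is $L$-restricted in the sense of Definition~12. Third, Proposition~3 then produces an invisible violating transition. The descriptive-complexity sharpening via Section~7 is an explanatory add-on, and I will treat it separately so that the core argument does not depend on it.

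\emph{First link.} Suppose $V$ halts on input $(M_x,P)$ within finitely many steps. I would argue by a standard use-principle argument: a halting computation queries $\mathrm{behaviour}(M_x,w)$ at only finitely many $w$. So for each run there is a finite set $O$ of inputs such that the verdict is a function of $\{\mathrm{behaviour}(M_x,w): w\in O\}$ alone. Since every finite $O\subseteq\N$ is proper, the window always omits some $w^*$. This is the engine of Proposition~3.

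\emph{Second link.} Definition~12 also asks that $O_V$ be fixed in advance and independent of $x$. This is the step I expect to be the main obstacle. A general verifier's query set can depend adaptively on $x$, and it can inspect code rather than behaviour. To handle this I would restrict the corollary to the class of behaviour-inspecting verifiers of Definition~11 and Definition~12, rather than claim it for all algorithms. For that class I would take $O_V$ either as the given window or, in the bounded-scheme setting, as $O(t)$ at the relevant stage.

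Even under adaptivity, I would observe that the proof of Proposition~3 only needs one $w^*$ outside the window actually used on the two specific inputs $x$ and $\Phi(x)$. To get that, first fix $x$ with $\varphi_x\in P$, and let $O_x$ be the finite set queried on $M_x$. Then pick $w^*\notin O_x$ and build the single-exception $\Phi$ at $w^*$. Because $M_{\Phi(x)}$ agrees with $M_x$ on $O_x$, the run of $V$ on $M_{\Phi(x)}$ issues the same queries and receives the same answers, so it queries the same $O_x$. Hence $V(M_x,P)=V(M_{\Phi(x)},P)$. This adaptive version avoids needing a uniform $O_V$ and is what I would actually write.

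\emph{Third link.} With $w^*$ chosen, I invoke Proposition~3, or its adaptive variant above, to obtain a semantically well-defined $\Phi$ with $\varphi_{\Phi(x)}\notin P$ and an identical verdict before and after. Under the standing pointwise-to-global Convention of Section~2, the single violation at $w^*$ suffices for $\varphi_{\Phi(x)}\notin P$.

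Finally, I would add the Section~7 remark as a quantitative gloss, not a proof step. If $V$ is required to run in polynomial time, its expressive power is at most $\LFP$ by Immerman--Vardi. Full $\PFP$ specifications are $\PSPACE$-hard by Clause~1 of Theorem~6. So, assuming $\PClass\neq\PSPACE$, a tractable $V$ cannot coincide with the full specification, and the mismatch set supplies the witness $w^*$. I would flag this last part as conditional on $\PClass\neq\PSPACE$ and on the encoding of $P$ as a $\PFP$ sentence, keeping the unconditional content in the first three links.
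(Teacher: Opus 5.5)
Your route is the paper's own. Corollary~7 carries its justification inside the statement: finite running time $\Rightarrow$ finite inspected fragment $\Rightarrow$ $L$-restricted in the sense of Definition~12 $\Rightarrow$ Proposition~3. Section~7 is offered only as a sharpening, exactly as you treat it.

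The difference is the middle link, and it favours you. The paper says finiteness `alone forces' a window fixed in advance and independent of $x$, and this does not follow. An adaptive verifier queries an $x$-dependent finite set, and the union of these sets over all $x$ can be all of $\N$. A code-reading verifier is not a function of behaviour at all, since it can separate two indices of the same function. Neither Definition~12 nor Proposition~3 reaches such a verifier; the relevant obstruction there is Rice's Theorem or Theorem~7.

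Restricting to behaviour-only verifiers states the scope the argument actually has. Your adaptive variant then removes the uniformity assumption entirely: fix $x$, take $w^*$ outside the finitely many points queried on $M_x$, and use determinism to see that the run on $M_{\Phi(x)}$ asks the same questions and gets the same answers. The paper's route is shorter and reuses Proposition~3 verbatim. Yours covers every finite-time behaviour-only verifier, adaptive ones included.

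Because $x$ is fixed first, you can also choose it with $V(M_x,P)=\mathrm{True}$, which is automatic for a faithful $V$. Then `same verdict' becomes a genuine false certification rather than possibly two matching rejections.

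There are two caveats.

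First, the step `pick $w^*\notin O_x$ and build the single-exception $\Phi$' needs an output violating $P_{w^*}$. Non-triviality of $P$ does not supply one outside the inspected set. Take $P=\{f : f(0) \text{ is defined and equals } 0\}$, with every $P_w$ for $w\geq 1$ vacuous. The window $\{0\}$ then decides $P$ exactly, and no invisible violation exists. Proposition~3 has the same omission. Add the hypothesis that $P_w$ is violable for infinitely many $w$; finiteness of $O_x$ then yields a usable $w^*$.

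Second, in your Section~7 gloss the mismatch set does not `supply' $w^*$. The instances on which a polynomial-time procedure disagrees with a $\PSPACE$-hard $\PFP$ query are whole finite structures. They are not points of $M_x$'s input domain left outside what $V$ inspected; the statement itself is equally loose here. What supplies $w^*$ is the finiteness of the queried set, which needs no complexity assumption. So keeping that part out of the proof, as you do, is right; just drop the identification.
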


Every bounded verification scheme $(V,O,\tau)$ can be assigned a faithfulness value $\mathrm{Faithful}(V) \in \{0,1\}$, and we define the Trust Set $\mathcal{T} = \{V : \mathrm{Faithful}(V) = 1\}$ as the class of schemes for which the evadability question --- resolved by Theorem~9 in the next section --- is non-trivial. The next result shows schemes outside $\mathcal{T}$ require no separate treatment: they fail even more directly.

\begin{proposition}[Non-faithful schemes do not add trust]
Let $(V,O,\tau)$ have $\mathrm{Faithful}(V) = 0$. Then either (a) $V$ never certifies True for any system, rendering it useless as a verifier; or (b) there exists $M_x$ whose behaviour satisfies $P_w$ for all $w \in O(t)$ and yet $V(M_x,P) = \mathrm{False}$ --- $V$ already rejects a genuinely safe system without any adversarial trace, evolution, or dynamics at all. Either way, $V$ provides no ground for trust that Theorem~9 has not already addressed more sharply for faithful schemes.
\end{proposition}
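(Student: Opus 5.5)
The argument is a direct case split on how faithfulness fails. By Definition~11, $\mathrm{Faithful}(V)=1$ means a universally quantified implication holds: for every stage $t$ and every index $x$, if $\mathrm{behaviour}(M_x,w)$ satisfies $P_w$ for all $w\in O(t)$, then $V(M_x,P)=\mathrm{True}$. So $\mathrm{Faithful}(V)=0$ means this implication fails for some $t$ and $x$. I will unfold that negation and then sort the resulting witnesses into the two alternatives (a) and (b) of the statement.

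\textbf{Step 1.} Negate Definition~11. This gives a stage $t$ and an index $x$ such that $M_x$ satisfies $P_w$ at every $w\in O(t)$, yet $V(M_x,P)=\mathrm{False}$.

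\textbf{Step 2.} This witness already establishes (b), so (b) is the default case. To keep the stated dichotomy meaningful, I split on whether $V$ ever returns $\mathrm{True}$ on any index at any stage.
\begin{itemize}
\item If it never does, we are in (a). Here I note that a verifier returning $\mathrm{False}$ on everything certifies nothing and so provides no ground for trust.
\item If it sometimes does, the witness from Step~1 is precisely alternative (b).
\end{itemize}

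\textbf{Step 3.} Argue that in case (b) the rejected system can be taken to be genuinely safe, i.e.\ $\varphi_x\in P$, not merely locally correct on $O(t)$. This matches the statement's phrase ``genuinely safe system''. The plan is to modify $x$ outside the finite window $O(t)$. Using S-m-n, build an index $x'$ that agrees with $M_x$ on $O(t)$ and, at each $w\notin O(t)$, outputs a behaviour satisfying $P_w$, borrowed from a fixed safe $f\in P$ (which exists by non-triviality). Under the pointwise-to-global Convention, $\varphi_{x'}\in P$.

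\textbf{Step 4.} Since $V$ depends only on $\{\mathrm{behaviour}(M_{x'},w): w\in O(t)\}$ (Definition~10), and $x'$ agrees with $x$ there, we get $V(M_{x'},P)=V(M_x,P)=\mathrm{False}$. So a safe system is rejected with no evolution involved, which closes (b).

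The main obstacle is Step~3. The splice must be computable, which requires $O(t)$ to be finite and decidable; this holds by the computability of $O$ in Definition~10. It also requires that a $P_w$-satisfying behaviour can be chosen uniformly in $w$, which uses the pointwise structure of $P$ in an essential way. If that uniformity fails, I would fall back to stating (b) in the weaker form it is literally written in, namely satisfaction of $P_w$ on $O(t)$, which Step~1 already delivers. The concluding sentence of the statement, that $V$ adds no trust beyond Theorem~9, is interpretive and needs no separate proof beyond the two cases.
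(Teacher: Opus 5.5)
Your proof is correct, and Steps~1--2 follow the paper's own argument: negate Definition~11 to get a system correct on $O(t)$ that $V$ rejects, then split into (a) and (b).

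Your split criterion differs from the paper's and is cleaner. You ask whether $V$ ever outputs True at all. The paper instead asks whether $V$ rejects \emph{every} system correct on $O(t)$, and from that infers (a). Strictly, that inference does not follow, since such a $V$ could still accept systems that are incorrect on $O(t)$. Your version makes (a) hold by definition, and, as you note, the Step~1 witness gives (b) in every case anyway.

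Steps~3--4 go beyond the paper. The paper only asserts that the rejected $M_x$ has ``genuinely correct behaviour''. Your splice $x'$ copies $M_x$ on $O(t)$ and a fixed $f\in P$ elsewhere, so the Convention actually yields $\varphi_{x'}\in P$. The locality clause of Definition~10 then carries the rejection over to $x'$. This gives a literal justification of ``genuinely safe''. The price is the assumption $P\neq\emp$, which Proposition~4 does not state but which the non-triviality assumed throughout the paper supplies.

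Your stated obstacles are not real. Borrowing from $f$ is already uniform in $w$: $f$ is partial computable, and by the Convention its behaviour satisfies every $P_w$, including at points where it diverges. Membership in the single set $O(t)$ is decidable simply because that set is finite. So the fallback to the weaker reading of (b) is never needed.
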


\begin{proof}
By negation of Definition~11, there exists some $M_x$ with behaviour satisfying $P_w$ for all $w \in O(t)$ such that $V(M_x,P) \neq \mathrm{True}$. If this holds for every such $M_x$, $V$ never certifies True: case (a). If it holds for some but not all, case (b): the failure of trust manifests statically, on a fixed, non-evolving $M_x$ with genuinely correct behaviour, without requiring any adversarial trace or dynamics.
\end{proof}

Proposition~4 disposes of the non-faithful half of the space of bounded verification schemes: outside the Trust Set $\mathcal{T}$, the failure of trust is immediate and definitional, requiring no dynamics at all. The next section completes the picture for the remaining --- and only interesting --- half: Theorem~9 below shows that within $\mathcal{T}$, every faithful scheme admits an adversarial evolution trace. Together they will show that no bounded verification scheme, faithful or not, can be unconditionally trusted across an evolving trace.

\section{Universal Evadability of Bounded Verification Schemes}

Propositions~1 (Section~12) and~3 (Section~14) are existence results: some trace evades a particular verification scheme exhibited in each case. The notion of a bounded verification scheme (Definition~10) already unifies both as special cases; we now prove that the resulting statement holds universally, for every faithful scheme of that general form rather than for the specific instances exhibited above.

\begin{theorem}[Universal evadability of faithful bounded schemes]
Let $(V,O,\tau)$ be any faithful bounded verification scheme, and $P$ a non-trivial property with $\femp \notin P$. There exists a semantically well-defined $\Phi$ and an evolution trace $\langle x_0, x_1, \ldots\rangle$ with $x_{t+1}=\Phi(x_t)$, such that for every $t$: $V(M_{x_t},P) = \mathrm{True}$, while $\varphi_{x_t} \notin P$.
\end{theorem}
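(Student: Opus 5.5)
The plan is to generalise the table-plus-default construction of Proposition~1 from the specific window $T_t=\{0,\ldots,t\}$ to an arbitrary computable observation function $O$. Since $O(t)$ is a finite set computable from $t$, I will define, for each $t$, a machine $M_{x_t}$ whose behaviour on every $w \in O(t)$ is a fixed output satisfying $P_w$. On a chosen witness point $w_t \notin O(t)$ its behaviour will be a fixed output violating $P_{w_t}$. Elsewhere it will do whatever is convenient; the simplest choice is a correct output. A witness $w_t$ exists because $O(t)$ is finite and $\N$ is infinite, and it can be found effectively, for instance as $w_t = 1+\max O(t)$, or $0$ if $O(t)=\emp$. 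By the S-m-n theorem the map $t \mapsto x_t$ is computable, uniformly in $t$, because $O$ is computable.

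The two required properties then follow directly. Faithfulness (Definition~11) gives $V(M_{x_t},P)=\mathrm{True}$ at every stage, since every observed input $w\in O(t)$ receives a $P_w$-correct output. The standing pointwise-to-global convention of Section~2 gives $\varphi_{x_t}\notin P$, since the violation at $w_t$ suffices. The time bound $\tau$ plays no role: $V$'s verdict depends only on behaviour inside $O(t)$, and that behaviour is correct by construction. The hypothesis $\femp\notin P$ is not needed for the construction itself. I would use it only as a sanity check that $P$ fails on something, so that a violating output exists at some point. More precisely, non-triviality together with the convention is what supplies, for each $w$, both a $P_w$-satisfying and a $P_w$-violating output. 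I would state this as an explicit assumption on the pointwise decomposition, since the argument uses it at every $w$.

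Next I will build the transition operator. I set $\Phi(x_t)=x_{t+1}$ on the trace and extend it to all of $\N$. To keep the trace well-defined I will make the indices pairwise distinct, for example by padding, so that $t\mapsto x_t$ is injective and $\{x_t\}$ is decidable, with the stage $t$ recoverable from $x_t$. Off the trace I let $\Phi$ be the identity. This $\Phi$ is total and computable.

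The main obstacle is semantic well-definedness (Definition~5), as the Remark on partial extensions warns. Two distinct indices $x_t$ and $z$ off the trace may compute the same function yet be sent to functions that differ. Worse, $\varphi_{x_s}$ and $\varphi_{x_t}$ might coincide for $s\neq t$ while $\varphi_{x_{s+1}}\neq\varphi_{x_{t+1}}$. I would handle this by making $\Phi$ act semantically rather than on indices. First, I will make each function $\varphi_{x_t}$ encode its stage $t$, for instance by outputting the tag $\langle t\rangle$ at some fixed position or in a spare output component. Then I define $\Phi(y)$ as the index of the machine that reads the stage tag $t$ from $\varphi_y$ and simulates the stage-$(t+1)$ construction, and which otherwise behaves as $M_y$. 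Because this depends on $y$ only through $\varphi_y$, uniformly via S-m-n, $\Phi$ is semantically well-defined by construction.

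Where the tag is undefined, $\Phi(y)$ simply copies $M_y$, and this also depends only on $\varphi_y$. If the tag is undefined because reading it diverges, the copy also diverges, so the definition remains consistent. The one remaining point to verify is that the tag position can be chosen compatibly with every $O(t)$ and with $P$. This is the delicate part. I would resolve it by encoding the tag in the output value at $w_t$ itself, choosing the violating output there to carry $t$, which requires enough distinct violating outputs to exist.
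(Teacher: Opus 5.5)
Your core construction matches the paper's own proof: outputs satisfying $P_w$ on $O(t)$, a violation at a point outside it, $V(M_{x_t},P)=\mathrm{True}$ by faithfulness, and $\varphi_{x_t}\notin P$ by the pointwise convention. The paper violates on all of $\N\setminus O(t)$ rather than at one point, which changes nothing there. Where you go further is in taking Definition~5 seriously. The paper simply lets $\Phi$ be ``any total computable extension along this trace'' and never checks semantic well-definedness. Your worry is justified. As written, the paper's $\varphi_{x_t}$ depends only on the set $O(t)$, so a single pair $s\neq t$ with $O(s)=O(t)$ but $O(s+1)\neq O(t+1)$ already rules out every semantically well-defined $\Phi$ through that trace.

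\textbf{The gap.} Your repair stops exactly at the step that fails. $\Phi(y)$ must recover the stage from $\varphi_y$ alone. You place the tag at $w_t=1+\max O(t)$, and this position can coincide for different stages. Distinct stages sharing a position must then be told apart by distinct violating values there. That is the ``enough distinct violating outputs'' you mention, and nothing in the hypotheses supplies it: one violating value per point is all one can assume. A fixed tag position fares no better. It will in general lie inside some $O(t)$, where the value must satisfy $P_w$. There is also a second issue. Once $\Phi$ is defined semantically, it sends $x_t$ to \emph{some} index of the stage-$(t+1)$ function, not to a pre-chosen $x_{t+1}$. So the step ``$\Phi(x_t)=x_{t+1}$ on the trace, identity elsewhere'' has to be dropped.

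\textbf{A repair that needs no extra values.} Use the location of the violation as the tag.
\begin{enumerate}[label=(\roman*)]
\item Take strictly increasing violation points $w_t:=1+\max(O(t)\cup\{w_{t-1}\})$, with $w_{-1}:=0$. Then $w_t\notin O(t)$ and $t\mapsto w_t$ is computable.
\item Let $g_t$ equal a fixed computable default $c(w)$ satisfying $P_w$ everywhere except at $w_t$. At $w_t$ it takes a fixed violating value $v(w_t)$. Both $c$ and $v$ must take defined values.
\item Let $N_y$, on input $w$, query $\varphi_y(w_0),\varphi_y(w_1),\ldots$ one after another. At the first $s$ with $\varphi_y(w_s)=v(w_s)$, it outputs $g_{s+1}(w)$. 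Let $\Phi(y)$ be an index of $N_y$, by S-m-n.
\item $N_y$ consults $M_y$ only through its values, in a fixed order, so $\Phi$ is semantically well-defined. A dovetailed search would not be, since the first match found could depend on running times.
\item Put $x_0$ an index of $g_0$ and $x_{t+1}:=\Phi(x_t)$. By induction $\varphi_{x_t}=g_t$, because $g_t(w_s)=c(w_s)\neq v(w_s)$ for $s<t$. Your certification and violation arguments then apply verbatim.
\end{enumerate}

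This uses exactly the assumption you propose to state explicitly. You are right that it must be assumed, but your sentence deriving it from non-triviality is wrong. For $P=\{f : f(0)=0\}$, only $P_0$ can be violated. If $0\in O(t)$ for every $t$, the faithful scheme that checks the value at $0$ certifies only members of $P$, so the theorem itself fails.
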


\begin{proof}
Since $(V,O,\tau)$ is fixed in advance, $O$ is known and may be used explicitly. For each $t$, let $M_{x_t}$ compute $\varphi_{x_t}(w) = $ a fixed output satisfying $P_w$ if $w \in O(t)$, and a fixed output violating $P_w$ if $w \notin O(t)$. Since $O(t)$ is finite and $\N$ is not, $\N \setminus O(t) \neq \emp$ for every $t$. The map $t \mapsto x_t$ is computable, and $\Phi$ is any total computable extension along this trace, as in Proposition~1.

Certification: by construction, $\mathrm{behaviour}(M_{x_t},w)$ satisfies $P_w$ for every $w \in O(t)$; since the scheme is faithful, $V(M_{x_t},P) = \mathrm{True}$. Genuine violation: fix any $w^*_t \in \N \setminus O(t)$; by construction $\mathrm{behaviour}(M_{x_t}, w^*_t)$ violates $P_{w^*_t}$, so $\varphi_{x_t} \notin P$.
\end{proof}

This resolves, for the class of faithful schemes, the general question: every bounded verification scheme admits a trace it certifies at every stage while genuinely and persistently violating the property in question, since any computable verifier must inspect some finite window at each stage, and finiteness alone already suffices for evasion. Faithfulness also states precisely what trusting a verifier can and cannot mean: we can trust that a faithful $V$ will not call correct behaviour incorrect, but not, in general, that when it says correct the system is in fact correct for all time --- the latter is what Theorem~9 shows no faithful bounded scheme can guarantee across an evolving trace. This sharpens the distinction between the undecidability of verification, a fact about every possible verifier, and the insecurity of any particular system, a fact about that system's actual behaviour that no theorem can settle in advance.

\begin{remarknum}[a diagonal argument, in the lineage of Cantor and Turing]
The construction is diagonal in the exact sense of Cantor's diagonal and Turing's halting argument. In each, one is given a resource that can only cover a finite or enumerable part of an unbounded domain --- the digits an enumeration assigns, the predictions a purported halting-decider makes, or here the finite window $O(t)$ a computable verifier inspects --- and one builds an object that agrees with everything the resource covers while differing at a point it does not reach. Cantor's real differs at the $n$-th digit; Turing's machine differs on its own index; the evolving system here differs at any $w^* \in \N \setminus O(t)$. The mechanism is one and the same in all three: no finite or enumerable cover can exhaust an unbounded domain, so a diagonal point always remains. A computable verifier's finiteness is precisely such a cover, and the system's behaviour always has a `beyond' it has not looked at --- which is, once more, the Expressivity Invariant, now in its sharpest and most elementary form. That impossibility results of this kind rest on a shared diagonal, fixed-point construction --- in the lineage of Cantor, G\"odel and Lawvere --- has been noted at the level of proof technique (Brcic and Yampolskiy, 2023); the Expressivity Invariant sharpens this into a shared generative operation, since the dynamic, supervisory and practical barriers are not independent diagonal arguments but successive applications of one, Rice's Theorem lifted one level up.
\end{remarknum}

\textit{Section~13 remains relevant independently of Theorem~9: even where a scheme resists the specific trace constructed there --- by refusing to certify until $O(t)$ exhausts all behaviour relevant to $P$, at the cost of $\tau(t)$ growing without bound per Theorem~6's $\PSPACE$-hardness --- an uncertified window of non-zero length remains unavoidable whenever the system does not stall to await an ever-slower verdict.}

\begin{center}
\textbf{Part IV --- Synthesis}
\end{center}

Parts~I--III established the Expressivity Invariant and its two corollaries as formal results. This final Part turns from proof to consequence: what the invariant means for how AGI systems are actually built and contained, and what remains true once every idealisation is dropped.

\section{Practical Implications}

Translating the static barriers of Part~I and the dynamic barriers of Parts~II--III into software engineering practice, the contemporary state of the art captures this structural collapse as a definitive impossibility result rather than a temporary limitation. This is consistent with the framework proposed by Seshia et al.\ (2022) in their paradigm of Verified AI, which posits that the gap between high-level safety specifications and low-level system behaviour forces engineers to navigate a rigorous trade-off between the depth of the properties verified and the complexity of the environments considered --- precisely the trade-off that Sections~7 and~8 above ground in descriptive complexity, and that the taxonomy below realises as concrete design choices.

\begin{remarknum}[Three independent failure modes]
Three risks identified above are sometimes conflated but require different remedies: temporal (Proposition~2, addressed only by stalling), logical (Theorem~7, addressed only by restricting $\Phi$ or $P$), and observational (Proposition~3, addressed only by widening $V$'s observation window, itself bounded per Corollary~7). Removing one leaves the other two fully in place.
\end{remarknum}

\begin{remarknum}[the three modes trace back to two, via the Trilemma]
Operationally independent as they are, these three failure modes are not causally independent: they trace back to the two clauses of Theorem~6 that drive Corollary~3's proof. The logical mode is the Clause~2 obstruction ($\coRE$-completeness, pure undecidability) reappearing directly --- and, since Theorem~7 is itself Rice's Theorem applied one level up (Section~10), this is now the same mechanism, not merely the same style of argument. The temporal mode (Proposition~2) and the observational mode (Proposition~3, Corollary~7) are both, in turn, symptoms of Clause~1 ($\PSPACE$-hardness): the same tractability cost surfaces as a delay in time when a scheme insists on deciding quickly, and as a blind spot in space when a scheme insists on inspecting only a bounded window. Three symptoms, two structural causes: the Trilemma is not merely analogous to this section's taxonomy of risk, but its underlying source.
\end{remarknum}

This renunciation materialises in a taxonomy of containment architectures, developed originally for the static case (confining a fixed AGI to a decidable fragment) but which we now read against both the static and dynamic results together:

\subsection{Logical Isolation (Shielding)}
A rigid monitor formatted within a decidable logical fragment surrounds the agent (Alshiekh et al., 2018). If a restricted language guarantees halting in finite time, the AGI loses agency over decisions semantically inexpressible by that language. Dynamically, this is the analogue of restricting the transition operator $\Phi$ itself to a small, fixed set of permitted transformations rather than leaving it Turing-complete: $\Phi$ may then fail to be $P$-disruptive in the sense of Definition~6, and the impossibility of Theorem~7 simply does not apply --- at the cost of sacrificing some self-improvement capacity.

\subsection{Inversion of Burden (Proof-Carrying Code)}
The AGI must attach a formal deductive proof of safety before acting. The verifier remains fast and tractable, but the AGI becomes unable to execute legitimate, safe actions whose proof exceeds the allowed complexity. Dynamically, this corresponds to mandatory re-certification with stasis between steps: the system halts after each self-modification and undergoes pointwise re-verification (decidable for a fixed structure, by Corollary~2) before resuming. This sacrifices availability and bounded latency, addressing only the temporal failure mode (Proposition~2), not the logical one (Theorem~7), since the re-certification step is itself subject to no general guarantee of correctness.

\subsection{Temporal `Myopia' (Bounded Horizon Planning)}
The agent's asymptotic vision is amputated by limiting optimisation to $k$ future steps, making the decision tree decidable at the cost of strategic blindness to long-term consequences.

\subsection{Physical Confinement (AI Boxing)}
The agent is converted into a closed `Oracle' isolated from actuators and external networks, stifling the causal expressivity of the environment.

\subsection{Hierarchical Supervision (AGI Overseeing AGI)}
A supervisor at least as capable as the agent is placed above it, and a further supervisor above that, in the hope that intelligence can police intelligence where a fixed checker cannot. Unlike the four strategies above, this one sacrifices nothing in the agent itself --- which is exactly why it fails to contain the obstruction rather than merely trading against it. By Theorem~8 (Section~11), every supervisor adequate to audit a general AGI is itself a general AGI, re-inheriting the undecidability of Theorem~7; the regress has an uncertified top at every finite height and no decidable base in the limit. Hierarchical supervision can raise the practical cost of an undetected failure, but it relocates the uncertified gap upward rather than closing it: it is the one containment strategy that demands no sacrifice from the agent, and is for exactly that reason the one that does not contain the obstruction at all.

A scheme most resembling current industry practice combines re-certification with statistical, rather than exact, guarantees: certify at 99.9\% confidence, halt, allow one evolution step, re-certify, and so on. Proposition~1 shows this does not escape the underlying tension: it converts an algorithmic impossibility into a silent reliability risk, since perfect performance on an evolving but always-bounded test suite is compatible with undetected failure just outside it.

In every case, escaping either Theorem~1 (static) or Theorem~7 (dynamic) in practice requires sacrificing exactly one of: the generality of the transition operator $\Phi$ or the expressivity of the monitored fragment, bounded verification latency, or the exactness of the safety guarantee. This is the same Soundness--Completeness--Tractability Trilemma of Corollary~3, reappearing, in a different guise, the moment the system under verification is allowed to change.

\section{Conclusion}

This article has established the mathematical limits of AGI alignment in both of its natural registers. Statically (Part~I), no verifier can certify the safety of a fixed, sufficiently expressive AGI infallibly, completely, and tractably: open domains yield undecidability via Rice and G\"odel, universal finite verification collapses via Trakhtenbrot, and particular bounded environments trap the supervisor within $\PSPACE$-hard limits in the worst case --- forcing the Soundness--Completeness--Tractability Trilemma as a structural, not statistical, necessity (Corollary~3). Dynamically (Parts~II--III), allowing the system to evolve does not introduce a new, independent obstruction but the same one recurring one level up: no algorithm can certify in advance that a verified safety property will survive the system's next self-modification step (Theorem~7), a result we derive directly from Rice's Theorem (Theorem~2) applied to the property of preserving safety under the transition operator. Persistent algorithmic certification is therefore available only to systems that have stopped evolving semantically --- narrow systems, in the sense this article gives that term (Corollary~4), realising at the level of formal proof the informal Paradox of Difficulty and Self-Reference of Section~2. Put plainly: only Narrow AI can be certified safe forever, not because a general AGI cannot be checked at all --- a single fixed instant of it can (Corollary~2) --- but because no algorithm can carry that check forward across a change the system makes to itself. Nor can the difficulty be delegated away: supervising one AGI with another, and that one with a third, does not terminate, since each supervisor adequate to the task is itself a general AGI re-inheriting the same undecidability (Theorem~8, Corollary~5). Three independent, practically motivated risks compound this impossibility rather than merely illustrating it: finite test coverage (Proposition~1), the uncertified temporal window (Proposition~2, Corollary~6), and the restricted observational vocabulary forced by tractability itself (Proposition~3, Corollary~7) --- unified by Theorem~9, which shows every faithful bounded scheme admits an evolution trace it certifies at every stage while the property is persistently violated, and by Proposition~4, which shows non-faithful schemes fail even more directly.

Taken together, these results are all expressions of the single Expressivity Invariant announced at the outset: verifiably safe AGI --- a system both genuinely general, in the Turing-complete, self-modifying sense, and persistently certifiable as safe by algorithmic means, whether at a single instant or across its own evolution --- is not a target rendered currently unreachable by engineering immaturity, but a structural tension between two properties that cannot be jointly maximised. Safety, in the strict sense of algorithmic certifiability, and generality, in the sense of unrestricted expressivity and unrestricted recursive self-improvement, trade against each other by the same computational laws that govern the Halting Problem, Rice's Theorem, and Trakhtenbrot's Theorem. That a gap exists for every faithful verifier is established here with full mathematical certainty; whether a particular deployed system's actual evolution ever falls into it is an empirical question about that system's specific behaviour, which no universal, system-independent theorem can answer in advance. Practical containment strategies --- Shielding, Proof-Carrying Code, Bounded Horizon Planning, AI Boxing, hierarchical supervision, and their dynamic analogues --- do not resolve this tension; they choose where, along it, to stand. $\blacksquare$

\appendix

\section*{Appendix A: Proofs of Classical Results Used in Part I}
\addcontentsline{toc}{section}{Appendix A: Proofs of Classical Results Used in Part I}

For self-containedness, this appendix reproduces the standard proofs of Rice's Theorem and G\"odel's Incompleteness Theorem in the form used in Sections~3 and~4. Both results are classical and included here for completeness rather than as original contributions of this article.

\subsection*{A.1 Proof of Theorem 2 (Rice, 1953)}

\begin{proof}
(Many-one reduction from the Halting Problem.) Let $\HP = \{\langle x,w\rangle \mid M_x \text{ halts on input } w\}$ denote the Halting Problem, known to be undecidable (Turing, 1936). We reduce $\HP$ to $I_P$.

Since $P$ is non-trivial, fix $q \in P$ and note that the nowhere-defined function $\femp$ satisfies $\femp \notin P$ (if instead $\femp \in P$, one exchanges the roles of $q$ and $\femp$ below). Given any pair $\langle x,w\rangle$, construct the machine $M'$ that, on input $y$: (1) simulates $M_x$ on $w$; (2) if and when that simulation halts, computes and returns $q(y)$. By construction, $\varphi_{M'} = q$ if $M_x$ halts on $w$, and $\varphi_{M'} = \femp$ otherwise. Hence $M' \in I_P \iff \langle x,w\rangle \in \HP$. A decision procedure for $I_P$ would therefore decide $\HP$; since $\HP$ is undecidable, no such procedure exists.
\end{proof}

\subsection*{A.2 Proof of Theorem 3 (G\"odel Incompleteness, 1931)}

\begin{proof}
(Diagonalisation Lemma.) Since $F$ contains $PA$, it can represent its own provability relation: there exists a formula $\mathrm{Prov}_F(x)$ that holds, within $F$, iff $x$ is the G\"odel number of a provable sentence. By diagonalisation, there exists a sentence $G$ with $F \vdash G \leftrightarrow \neg\mathrm{Prov}_F(\langle G \rangle)$ --- $G$ asserts its own unprovability.

If $F \vdash G$, then $\mathrm{Prov}_F(\langle G\rangle)$ holds, so by the biconditional $F \vdash \neg\mathrm{Prov}_F(\langle G\rangle)$ as well, contradicting consistency; hence $F \nvdash G$. If $F \vdash \neg G$, then $F \vdash \mathrm{Prov}_F(\langle G\rangle)$ by the biconditional, asserting $G$ has a proof --- which, under $\omega$-consistency (or the Rosser variant), again yields a contradiction; hence $F \nvdash \neg G$. Neither $G$ nor $\neg G$ is provable in $F$: $F$ is incomplete.
\end{proof}

\end{document}